\title{Mechanism Design with Information Leakage\thanks{We thank Piotr Dworczak, Paul Klemperer, Nick Netzer, Axel Ockenfels, Armin Schmutzler, and seminar participants in Zürich and St.~Gallen for their comments.}}
\author{{\small
		\begin{tabular}{ccc}
			{\bf Samuel Häfner} & {\bf Marek Pycia} & {\bf Haoyuan Zeng}  \\
			University of St.Gallen & University of Zurich & University of Zurich \\
			\href{mailto:samuel.haefner@unisg.ch}{samuel.haefner@unisg.ch} & \href{mailto:marek.pycia@econ.uzh.ch}{marek.pycia@econ.uzh.ch} & \href{mailto:haoyuan.zeng@econ.uzh.ch}{haoyuan.zeng@econ.uzh.ch} \\
			\rule{1in}{0pt}& \rule{1in}{0pt} & \rule{1in}{0pt}\\
			\end{tabular}}}
\date{October 2025}
\begin{document}

\maketitle

\begin{abstract}
We study the design of mechanisms---e.g., auctions---when the designer does not control information flows between mechanism participants. A mechanism equilibrium is leakage-proof if no player conditions their actions on leaked information; a property distinct from ex-post incentive compatibility. 
Only leakage-proof mechanisms can implement social choice functions in environments with leakage.  
Efficient auctions need to be leakage-proof, while revenue-maximizing ones not necessarily so. 
Second-price and ascending auctions are leakage-proof; first-price auctions are not; while whether descending auctions are leakage-proof depends on tie-breaking.

\medskip
\keywords{Mechanism Design, Auction, Information Leakage, Eavesdropping}

\medskip
\JELcodes{D44, D47, D82, D83}
\end{abstract}

\newpage 

\begin{quote}
\textit{A small leak will sink a great ship.} (Benjamin Franklin)
\end{quote}

\section{Introduction}

Standard mechanism design focuses on environments in which the mechanism designer
has complete control over the extensive-form game the mechanism participants play, including complete control over who observes the moves they make. This complete control assumption underlies the revelation principle \citep{myersonOptimalAuctionDesign1981} and many other insights in mechanism design. In many modern market environments, however, the designer cannot fully control what players observe. In this paper, we study mechanism design in such environments.

For example, many online auctions are prone to so-called eavesdropping attacks, in which some participants can listen to the network for upcoming bids and use this information to their advantage when submitting their own bids; in the presence of eavesdropping, the designer does not have complete control over  
what the bidders observe.\footnote{For institutional details on eavesdropping attacks, see, e.g., \citet{franklinDesignImplementationSecure1996}.} In financial markets, participants with fast access to trading venues can be the first to assess the limit order book and execute their trade orders more quickly than competitors, thereby capturing transient arbitrage opportunities. 
These market participants have more flexibility over what they observe  
than the market designer may want to grant them.\footnote{The speed of access is so vital that some large traders place their servers as close to the exchange as possible and connect the two via dedicated optical fiber connections, resulting in what is sometimes referred to as a high-frequency trading arms race \citep*{budishHighFrequencyTradingArms2015}.} In peer-to-peer networks, such as blockchains, faster participants also observe slower participants' actions. Exploiting this information---known as front-running---frequently occurs on so-called decentralized exchanges and is regarded as one of the most pressing problems of blockchain technology.\footnote{See, e.g., \citet*{eskandariSoKTransparentDishonesty2020}. Blockchains are ledgers that record transactions. New transactions are added to blockchains in batches, called blocks. The production of new blocks happens in discrete time intervals. Between block production, the recent transactions are stored in the network. By design, the network is open, and any participant can inspect upcoming transactions.} 

To see the implications of information leakage (or eavesdropping), consider the sale of an object to one of two bidders who have private values drawn independently from the same distribution. In absence of information leakage, the first-price auction with no reserve is efficient while the first-price auction with optimal reserve is revenue maximizing. Both of these properties of first-price auctions break in the presence of information leakage. Indeed, suppose the the bid of the first bidder is seen by the second bidder before the second bidder submits their bid. The second bidder may then condition their bid on the bid of the first bidder, which breaks both efficiency and revenue-maximization of first-price auctions. 
Indeed, efficiency fails as, in any equilibrium, all but one types of the first bidder bid  below their own value thus allowing the second bidder to win the auction when their value is lower than that of the first bidder but still above the bid of the first bidder. 
Similarly, revenue-maximization fails because the information leakage makes it impossible to always assign the object to the bidder with higher Myersonian virtual value.\footnote{We provide more details on this example in Section 2.}

We study a designer who can choose any finite extensive-form game with perfect recall but does not control what history of the game the players see. Even when two or more players are to move simultaneously in the game chosen by the designer, some of them (``faster'' players) might see the moves made by others (``slower'' players) before deciding on their own move.\footnote{For expositional reasons, we focus on games that allow  simultaneous moves but otherwise have perfect information. All our insights remain true for general imperfect-information games. Not only all of our arguments extend to the general case, also our results directly imply the analogous results for the general case. The reason is simple: we are interested in mechanisms that perform well in the presence of leakage,  including leakage that reveals to every player the history of past play.} 
The information about players' moves might hence leak to other players and the designer cannot prevent it.  

In this setting we first address the question of what extensive-form game can implement an arbitrarily fixed (implementable) social choice function. We show that a game implements the social choice function if and only if it admits an equilibrium in which all players pursue strategies that are independent of the leaked information and this equilibrium implements the social choice function. We call such equilibria---in which leaked information is effectively ignored by the players---leakage-proof. Leakage-proofness is hence a necessary condition for implementability in environments in which the designer does not control information flows between the players. 

We then address the more applied problems of designing efficient auctions and revenue-maximizing auctions. We show that an auction is efficient only if it admits a leakage-proof $\varepsilon$-equilibrium that always results in efficient outcome. Second-price (and ascending) auctions remain efficient in the presence of leakage, whether we impose a common prior assumption or whether we allow heterogeneous priors. In contrast, first-price auctions are no longer efficient in the presence of leakage.  
When bidders share a common prior, then second-price (and ascending) auctions with optimally set reserve price are revenue maximizing.
In contrast, neither the second-price nor the first-price auctions maximize revenue independent of bidders' beliefs about leakage. Indeed, the truthful-bidding equilibria in these mechanisms are leakage-proof and hence attain the Myersonian revenue upper bound irrespective of leakage. In light of the above-discussed example, for some leakage priors second-price and ascending auctions raised noticeably more revenue than many other mechanisms, including first-price auctions. Beyond the common prior environment, the second-price and ascending auctions may however fail to maximize revenue. For instance, if no bidder can eavesdrop on other bidders but all bidders are concerned about the possibility that their bid leaks to others, then the first-price auction with optimal reserve achieves higher revenue than the second-price (and ascending) auctions. 

We further show that pure-strategy ex-post equilibria are always leakage-proof, but mixed-strategy ex-post equilibria are not necessarily so. Also, static (one simultaneous move) leakage-proof equilibria are also ex-post equilibria, but this property does not extend to leakage-proof equilibria of dynamic mechanisms. 

Finally, as discussed above, second-price (and ascending) auctions are leakage-proof while first-price auctions are not. Interestingly, whether descending (Dutch) auctions are leakage-proof depends on the tie-breaking rule employed in them. Under the standard equal-probability of winning tie-break rules, Dutch auctions are not leakage-proof. However, these auctions become leakage-proof if the good is not allocated in case of a tie-break.

\subsection{Related Literature}
While we believe we are the first to systematically study information leakage in a mechanism design framework, information flows have been studied in games and applied economic contexts. 
In particular, \cite{solanGamesEspionage2004} consider two-player normal-form  games, in which one player can noisily observe the strategy of the other player at some cost, and fully characterize the distributions over the players' payoffs that can obtain in equilibrium. 
\citet{pentaRationalizabilityObservabilityCommon2022} characterize the predictions of rationality and common belief in rationality that do not depend on players' infinite order beliefs over whether their actions are observable to their opponents.\footnote{For analysis of repeated games with mediation, in which mediators recommendations leak after each period, see \citet{Ewerhart-Zeng-MSPE}.}

Ex-post equilibrium \citep*{hurwicz1972,dasgupta-hammond-maskin1979implementation} may be motivated by the concern about leakage of information about players' types, as opposed to actions.\footnote{For a recent use of such motivated ex-post equilibrium concepts in mechanism design, see, e.g., \cite{Zeng-IPA}.} As we show, these two concerns are distinct.  
\cite{Madarasz-Projection-REStud} and \cite{madarasz2012information} study the impact of beliefs about leakage of type information on Coasian dynamics, while \cite{daley2012waiting} study the impact of the exogenous arrival of news on types in Coasian dynamics.\footnote{\cite{madarasz2022towards} allow the parties to control the news arrival and focus on privacy considerations, while \cite{Madarasz-Pycia-Control} study endogenous arrival of news in a wide class of trading games. }

By studying leakage, we contribute to the recent wave of literature that differentiate between auction mechanism on the ground of their robustness to  attacks such as (the lack of) credibility \citep*{akbarpourCredibleAuctionsTrilemma2020, banchio2025dynamic}, auditability \citep*{woodward2020self}, or shill-bidding \citep*[e.g.,][]{komo2024shill,Zeng-IPA}. While this literature focuses on the seller's inability to fully commit, we study the seller's inability to control information flows.

We also contribute to the literature on misspecified beliefs in mechanisms and  markets \citep*{Ledyard1978IncentiveCA,bergemann2005robust,chu2006agent,chassang2013calibrated,carroll2015robustness,wolitzky2016mechanism,madarasz2017sellers,liObviouslyStrategyProofMechanisms2017,Boe19,pyciaTheorySimplicityGames2023,li-dworczak2024simple}. 
While the main thrust of this literature is that robustness to misspecification requires the mechanism to be simple, we show that achieving efficiency in a way that is robust to misspecification requires the mechanism to be leakage-proof.\footnote{See, e.g., \cite*{Har79,Eys13,heidhues2018unrealistic,jantschgi2024double} for the analysis of the complementary problem of how traders with misspecified beliefs behave.}
 
Our analysis of leakage is relevant for many applied problems. 
In financial markets, traders often use faster access to information about upcoming trades to exploit arbitrage opportunities \cite*[see, e.g.,][for extensive evidence]{budishHighFrequencyTradingArms2015, baldauf2022fast}. \cite*{budishHighFrequencyTradingArms2015} propose batch auctions to limit the adverse effects of such front-running, while we focus on establishing the link between robust implementation and leakage-freeness.
\footnote{Information flows between participants matter also in treasury auctions and over-the-counter markets; for the discussion of evidence see, e.g., \cite{hortacsu-saarinen-2004}, \cite{hortaccsu2012valuing}, and \cite*{garratt-lee-martin-townsend-2019}.}

Leakage is also a major issue in online markets and there is an extensive computer science literature on cryptographic approaches to make auctions robust to information leakage \citep*{kudo1998secure,abe2002m+,parkes2006practical}.\footnote{Leakage of information about actions is related to sniping that is submitting one's bid as close as possible to the end-time of the auction \citep*{rothLastMinuteBiddingRules2002}. Our analysis of leakage-proofness might be viewed as addressing the inefficiencies associated with sniping.} 
Cryptographic methods that allow to shield bids during the auction and to verify them (or at least their ranking) after the auction have received renewed interest in decentralized blockchain settings, lately \citep*{blassStrainSecureAuction2018,blassBOREALISBuildingBlock2020,galalVerifiableSealedBidAuction2019}. We view our analysis as complementary to these approaches, because we focus on incentives, rather than cryptography, to prevent bidders from using leaked information.

Another previously studied solution to leakage are candle auctions, in which the acceptance of bids is uncertain. \cite*{gehrlein2025candle} show that a candle auction may implement the efficient and optimal outcome despite leakage, while 
\cite*{hafner2025front} show that candle auctions admit an approximate ex-post equilibrium in quickly escalating bids.

Finally, auctions are sometimes designed with explicit leakage component in order to privilege a specific bidder, who has the right-of-first refusal.
Such rights are commonly seen in the landlord-tenant and competitive procurement settings, see
\cite{riley1981optimal}, \cite*{burguet2009preferred}
, \cite*{bikhchandani2002right},  \cite*{choi2009rent}, and \cite{doran2018}.

\section{Example}
We give a brief example of the implications of leakage in auctions. Consider a first-price auction with one object for sale and two bidders $i\in\left\{ 1,2\right\}$. Bidders have independent private values, $\theta_i$, drawn from the uniform distribution on $[0,1]$ and quasi-linear utility. 

Without information leakage, we have a standard static auction. The unique equilibrium, where bidder $i$ follows a bidding strategy $\beta_i(\theta_i)$ is well known to be symmetric and given by
$$\beta_i(\theta_i) = \frac{\theta_i}{2}, \quad i=1,2.$$

Now, suppose we have information leakage and bidder $2$ observes bidder $1$'s bid, which is common knowledge among the bidders. The auctioneer, however, cannot condition the allocation on a bidder's identity. To guarantee the existence of best responses, we assume that the object goes to bidder $2$ in case of a tie. This can be justified because bidder $2$ can marginally overbid bidder $1$ and win. The equilibrium strategies can be obtained by backward induction and are as follows.
$$\beta_1\left(\theta_1\right)=\frac{\theta_1}{2} \text{ and }\beta_2\left(\theta_2,b_2\right)=\begin{cases}
b_1 & \text{if }\theta_2\geq b_1 \\
0 & \text{otherwise.}
\end{cases}$$

Two observations follow from this example. First, efficiency is violated, because bidder 2 wins the auction when $\theta_1/2<\theta_2<\theta_1$. Second, compared to the case without information leakage, the expected revenue is reduced. This is straightforward to see because the equilibrium strategy for bidder 1 is the same as in an environment without information leakage, but bidder $2$ bids strictly lower conditional on winning. Indeed, direct computation gives that revenue decreases from $\frac{1}{3}$ to $\frac{1}{6}$.

\section{Model}

\subsection{Environment}

\paragraph{Players, Outcomes, and Payoffs} Throughout, we consider finite games. The set of players is $N=\left\{ 1,\dots,n\right\} $. Player $i$'s payoff type is $\theta_i\in\theta_i$, where $\theta_i$ is a finite set. We write $\theta\in \times_{i\in N}\theta_i \equiv \Theta$ for a payoff type profile. All players have a common prior $\rho\left(\cdot\right)$ on $\Theta$. We assume that for each player $i\in N$, the payoff type $\theta_i$ is independently drawn from $\Theta_i$. Hence, $\rho(\theta)=\rho_1(\theta_1)\rho_2(\theta_2) ... \rho_n(\theta_n)$, where $\rho_{i}$ is the marginal distribution over $\theta_i$. The finite set of outcomes is $X$. Each player has a von Neumann-Morgenstern utility function $u_{i}:X\times\Theta \rightarrow\mathbb{R}$. The payoff structure is fixed, and it is common knowledge.

\paragraph{Extensive Form} The designer can choose an extensive form. We restrict attention to extensive-form games with perfect recall, which allow simultaneous moves but otherwise have perfect information. 
While this restriction simplifies our exposition and terminology, it can be fully relaxed: as mentioned in the introduction, all our insights remain true for general imperfect-information games.\footnote{All of our arguments extend to the general case, and, furthermore, our results directly imply the analogous results for the general imperfect-information case. Indeed, we study mechanisms that perform well in the presence of leakage and in general imperfect information games leakage might reveal to every player the history of past play. }  
Formally, we consider multistage games with observed actions, $G$, where 
\[
G=\left\{H,\subset,P,\left(A_{i}\right)_{i\in N},g\right\}.
\]
The elements of $G$ are defined as follows and summarized in Table~\ref{tab:Notation}. 
\begin{enumerate}
\item $H$ is a finite set of sequences, consisting of the public histories of moves before each stage $k=1,2,\dots$.
\begin{enumerate}
\item The history before first stage, $h^{0}$, is equal to the empty sequence $h_{\emptyset}\in H$.
\item The history before stage $k\geq2$ is a sequence of length $k-1$, denoted as $h^{k-1}=\left(a^{1},a^{2},\dots,a^{k-1}\right)\in H$. Each vector $a^t=(a_{1}^t,...,a_{n}^t)$ consists of the actions $a_{i}^t$ taken by player $i$ at stage $t$, where some, but not all, actions $a_{i}^t$ may be $\emptyset$.
\item If $h^{k}=\left(a^1,a^2,\dots,a^k\right)\in H$, then $h^{l}=\left(a^1,a^2,\dots,a^\ell\right)\in H$ for any $\ell<k$. We also say that $h^{k}$ is a successor of $h^\ell$, which is denoted by $h^\ell \subset h^{k}$.
\end{enumerate}
\item The game terminates at the end of a stage $k$ with $h^{k}=\left(a^{1},a^{2},\dots,a^{k}\right)\in H$ if there is no $a^{k+1}$ such that $h^{k} \| a^{k+1} \in H$, where $\|$ is the concatenation operator. The set of terminal histories is denoted by $Z$.
\item The player function $P:H\setminus Z\to2^{N}$ assigns to each nonterminal history $h^{k-1}\in H\setminus Z$, a set of players $P\left(h^{k-1}\right)$ who simultaneously take actions at stage $k$.
\item $H$ and $P$ jointly satisfy the condition that for each nonterminal history $h^{k-1}\in H\setminus Z$, there is a set of feasible actions $A_{i}\left(h^{k-1}\right)$ for each player $i\in N$ at stage $k$ such that
\begin{enumerate}
\item $A_{i}\left(h^{k-1}\right)=\emptyset$ for $i\notin P\left(h^{k-1}\right)$ and $| A_{i}\left(h^{k-1}\right)|\ge 2$ for $i\in P\left(h^{k-1}\right)$.
\item $\left\{ a^{k}: h^{k-1} \| a^{k} \in H\right\} =\times_{i\in N}A_{i}\left(h^{k-1}\right)$.
\end{enumerate}
\item The outcome resulting from any terminal history $z\in Z$ is denoted by $g\left(z\right)\in X$.
\end{enumerate}

\begin{table}
\caption{Notation}
\label{tab:Notation}
\renewcommand{\arraystretch}{1.2}
\centering
\begin{tabular}{lc}
\toprule 
Name & Notation\\
\midrule
Histories before each stage $k=1,2,\cdots$ & $h^{k-1}\in H$\\
Precedence relation over histories & $\subset$\\
Set of players whose actions are considered at $h^{k-1}$ & $P\left(h^{k-1}\right)$\\
Actions available at $h^{k-1}$ for player $i$ & $A_{i}\left(h^{k-1}\right)$\\
Terminal histories & $z\in Z$\\
Outcomes resulting from $z$ & $g\left(z\right)$\\
\bottomrule
\end{tabular}
\end{table}

In short, the game evolves over a finite number of stages $k=1,2,...$. At the start of each stage $k=1,2,\dots$, given the history $h^{k-1}\in H$, the game $G$ selects a set of players $P\left(h^{k-1}\right)$ who move simultaneously at stage $k$. Each active player $i\in P\left(h^{k-1}\right)$ takes an action $a_{i}^{k}\in A_{i}\left(h^{k-1}\right)$. The resulting action profile $a^{k}=(a_1^k,...,a_n^k)$, where $a_{i}^{k}$ = $\emptyset$ for $i\notin P\left(h^{k-1}\right)$, determines the updated history at the start of stage $k+1$, i.e.\ $h^{k}= h^{k-1} \| a^{k}$. This process continues until a terminal history is reached, determining the outcome of the game. At the start of each stage $k=1,2,...$, the histories $h^{k-1}\in H$ are publicly disclosed. Consequently, every player is perfectly informed of all actions previously taken by others, and we refer to $H$ as the set of \emph{public} histories.

\paragraph{Information Leakage} In addition to the standard setup above, we assume uncontractible information leakage between players about the simultaneous actions taken within a stage. The main idea, loosely speaking, is that some players ``see'' the concurrent actions of the other active players. Yet, the designer cannot discriminate players based on who sees whom, because leakages are not verifiable.

Formally, we define a binary leakage order $\precsim$ on the set of players $N$. The order is complete and transitive. We write $i\precsim j$ iff $\left(i,j\right)\in\,\precsim$, we write $i\sim j$ iff $i\precsim j$ and $j\precsim i$, and we write $i\prec j$ iff $i\precsim j$ is true but $i\sim j$ is not. The interpretation of this order is that if, for two players $i$ and $j$, it holds $i\prec j$, then $j$ can observe $i$'s concurrent actions whenever they are both asked to move at a stage, but not the other way round. We refer to player $j$ as being faster than $i$ (or, equivalently, $i$ being slower than $j$). That is, information leaks from slow players to fast players. If for two players $i$ and $j$ it holds that $i\sim j$, then we say they are equally fast. In that case, players $i$ and $j$ cannot see each other's actions. For further reference, we let $\mathcal{L}$ denote the set of all leakage orders on the set $N$. 

With information leakage, the players' information during the game differs. At stage $k$ given the public history $h^{k-1}\in H$, player $i$ under the leakage order $\precsim$ has a \emph{private} history,
\[
h_i^{k-1}=h^{k-1} \| \left\{ a_j^k\right\} _{j\prec i},
\]
where $a_j^k\in A_j\left(h^{k-1}\right)$. The private history consists of both the public history and the leaked information. 

We extend the precedence relation over public histories to private histories. At each stage $k$, player $i$'s private history succeeds the public history of the current stage and precedes the public history of the next stage, i.e. $h^{k-1}\subseteq h_{i}^{k-1}\subseteq h^{k}$. In particular, when player $i$ is (weakly) slower than any other player, their private history coincides with the public history, i.e., $h^{k-1}=h_{i}^{k-1}$. 

\paragraph{Leakage-Order Beliefs} To finish, we need to specify the players' leakage order beliefs. We require beliefs to be consistent with the actual leakage order, but we allow for beliefs that are inconsistent with those of other players.

To formalize this, let $\precsim$ be the actual leakage order and define $E_i(\precsim) \subseteq \mathcal L$ as the set of leakage orders that are consistent with $\precsim$ from the viewpoint of player $i$. Consistency here means that $E_i(\precsim)$ contains precisely those leakage orders under which $i$ observes the same players' actions as under $\precsim$ (but not necessarily in the same order). Formally,

\begin{definition}[The Set of Consistent Leakage Orders, $E_i(\precsim)$]
For a given leakage order $\precsim$ and a player $i$, the set of consistent leakage orders, $E_i(\precsim)$, is given by 
$$E_i(\precsim)=\left\{ \precsim' \in \mathcal L : j \prec i \iff j \prec' i, \ \forall j\right\}.$$
\end{definition}

We say that a player $i$ is of leakage type $t_i$ from some finite set of possible leakage types $T_i$. We write $T = T_1 \times ... \times T_n$ for the set of all possible type profiles $t=(t_1,...,t_n)$. We work with leakage beliefs that can be captured by the classical notion of a type space, $\mathcal T = \{T,(\tau_1,...,\tau_n)\}$ with $\tau_i:T_i \to \Delta(\mathcal L,T_{-i})$ for all $i$, where $\tau_i$ can be iteratively used to construct a player's first-order belief about $\mathcal L$ and their respective higher-order beliefs \citep[e.g.][]{siniscalchi2008epistemic}. Each player $i$ of type $t_i$ has a prior $\gamma_{-i}(.) = \text{marg}_{T_{-i}} \, \tau_i(t_i)(.)$ on the types of others, $T_{-i}$, where $\text{marg}_{Z}$ denotes the marginal distribution over $Z$. 

We make two substantive assumptions on any feasible type space $\mathcal T$. The first formalizes the idea that first-order beliefs must be consistent. Writing $\text{supp}(.)$ for the support of the distribution given in the argument, we have:

\begin{assumption}[Consistency of First-Order Beliefs] Suppose the leakage order is $\precsim \, \in \mathcal L$. For every $t_i \in T_i$, it holds that $\text{supp}\left(\text{marg}_{\mathcal L} \, \tau_i(t_i)\right) \subseteq E_i(\precsim).$    
\end{assumption}

While all types in a feasible leakage type space must hold consistent first-order beliefs, we allow players to hold wrong higher-order beliefs. The following example illustrates what we have in mind.

\begin{example}\label{ex:1} Let $\precsim$ be the actual order, let $L_1 = \left\{ i \in N: i \precsim j, \forall j \in N\right\}$ be the set of the slowest players and define the set of the $k$-th slowest players recursively as $L_k = \left\{ i \in N: i \precsim j, \forall j \in N\setminus (L_{k-1} \cup ... \cup L_1) \right\}$. Let $\bar k$ be the fastest group of players in $N$; i.e. $\bar k$ is the lowest $k$ for which $\cup_{\kappa \le k} L_\kappa = N$. Suppose there are $\precsim_1, ..., \precsim_{\bar k}$ with $\precsim_{\bar k} = \precsim$ such that:
\begin{itemize}
    \item[($B_1$)] Players in $L_1$ believe it is common knowledge among all players in $L_{\bar k} \cup ... \cup L_1$ that $\precsim_1$ is true, where $\precsim_1 \in E_j(\precsim)$, $\forall j \in S_1.$
    \item[($B_2$)] Players in $L_2$ believe it is common knowledge among all players in $L_{\bar k} \cup ... \cup L_2$ that both ($B_1$) and $\precsim_2$ are true, where $\precsim_2 \in E_j(\precsim)$, $\forall j \in S_2.$
    \item[($B_3$)] Players in $L_3$ believe it is common knowledge among all players in $L_{\bar k} \cup ... \cup L_3$ that both ($B_1$)--($B_2$) and $\precsim_3$ are true, where $\precsim_3 \in E_j(\precsim)$, $\forall j \in S_3.$
    \item[] ...
    \item[($B_{\bar k}$)] Players in $L_{\bar k}$ believe it is common knowledge among all players in $L_{\bar k}$ that both ($B_1$)--($B_{\bar k-1}$) and $\precsim_{\bar k}$ are true.
\end{itemize}
\end{example}

Under the leakage types described in Example \ref{ex:1}, the players in the fastest group know the true leakage order, $\precsim$. Players in any group are aware of the leakage types of the slower players. Players in slower groups may have a common belief that differs from the beliefs of players in a faster group and wrongly attribute this belief to the faster players. When $\precsim_1 = ... = \precsim_{\bar k}$, then we have common knowledge about the leakage order among all players. 

In general, beliefs about the slower players' beliefs need not be correct, nor need they be the same for players that are equally fast for our results to hold. Furthermore, beliefs can be probabilistic. We assume, however, that the leakage type space contains leakage type profiles of at least two variants that we call the zero-profile and one-profile, respectively.

\begin{definition}[Zero-Profile and One-Profile]\label{def:minimal}~
\begin{enumerate}
    \item The zero-profile $t^0 = (t_1^0,...,t_n^0)$ is a leakage type profile, where $t_i^0$ corresponds to the belief: It is common knowledge that everyone is equally fast. 
    \item A one-profile is any of the $n$ permutations of the leakage type profile $$(t_1^0,...,t_{i-1}^0, t_i^1,t_{i+1}^0....,t_n^0),$$ where $t_i^0$ is as above and $t_i^1$ corresponds to the belief: I am faster than everyone else, who all believe it is common knowledge that everyone is equally fast.
\end{enumerate}
\end{definition}

\begin{assumption}[Minimally Rich Type Space]\label{ass:minimal} Any feasible type space $\mathcal T$ contains the zero-profile and all permutations of the one-profile.\end{assumption} 

The leakage order in which everyone is equally fast is the unique order that is consistent with the zero-profile. Moreover, leakage orders in which exactly one player is faster than all other players (who are equally fast) are uniquely consistent with a one-profile. In other words, we consider situations with at least $n+1$ leakage orders: everyone is equally fast, and one of the $n$ players is faster than the other $n-1$ players. Assumption \ref{ass:minimal} is crucial for establishing the necessity of a mechanism's leakage-proofness in various contexts, as we discuss after the results below.\footnote{
    We further discuss how our setup relates to the special case of common knowledge about the leakage order in Section \ref{sec:commonknowledge}. 
    }

\subsection{Equilibrium}
The primary object of our interest is the leakage environment $\Gamma$, comprising the multistage game $G$ (which the designer can control) together with a leakage type space $\mathcal T$ (which the designer cannot control),
\[
\Gamma=\left(G,\mathcal T\right).
\]

A strategy $S_i(\theta_i,t_i)(h_i^{k-1})$ for player $i$ maps, for every private history $h_i^{k-1}$, the type $(\theta_i,t_i)$ into a probability distribution over the set of available actions $A_{i}\left(h^{k-1}\right)$, which depends on the public history $h^{k-1}$ preceding the private history $h_{i}^{k-1}$. That is, $S_{i}(\theta_i,t_i)(h_{i}^{k-1})\in\Delta A_{i}\left(h^{k-1}\right)$, where $h^{k-1}\subseteq h_{i}^{k-1}.$ For a fixed leakage type $t_i$, we denote the set of available strategies $S_i(.,t_i)$ by $\sum_i(t_i).$ The set of private histories is denoted by $H_i(t_i)$.

The belief of player $i$ after private history $h_i^{k-1}$ is denoted by 
\[
\mu_i\left(h_i^{k-1},t_i\right)\in\Delta\left(\Theta_{-i},T_{-i}\right),
\]
which is a probability distribution over the value types of the opponents, $\Theta_{-i}$, and the leakage types of the opponents, $T_{-i}$. Beliefs depend on the private history $h_i^{k-1}$ and the particular leakage type $t_i$ determining how the player interprets the observed history. The belief at the beginning of the game is $\mu_{i}\left(h_\emptyset,t_i\right) = \prod_{j\neq i}\rho_j(\theta_j)\times \gamma_{-i}(T_{-i}).$ We collect the profile of conditional beliefs in $\mu=\left(\mu_{i}\right)_{i\in N}$.

To define utilities, let $\sigma=\left(\sigma_{i}\right)_{i\in N}$ denote a profile of contingent action plans for each player, $\sigma_i(h_i^{k-1}) \in \Delta A_i(h^{k-1})$, prescribing the distribution over available actions after any private history $h_i^{k-1}$. Given such a profile $\sigma$, let $\zeta(\sigma)$ denote the probability distribution over the terminal histories $Z$ in the game $G$. The expected utility of player $i$ under such a profile $\sigma$ is
\[
U_i\left(\sigma,\theta\right) \equiv \sum_{z \in Z} u_{i}\left(z,\theta\right)\zeta(\sigma)(z).
\]

Our equilibrium concept is standard PBE \citep{fudenbergPerfectBayesianEquilibrium1991}, requiring that player $i$ responds optimally to what they expect the other players to play, given their beliefs about payoff and leakage types. Whenever possible (in particular, if $t_i$ is such that $h_i^{k-1}$ appears on the path of play given an opponent strategy profile $(S_j(.,.))_{j \neq i}$ and the corresponding beliefs about payoff and leakage types), player $i$ updates their beliefs according to Bayes' rule. Off the path of play, we allow for arbitrary beliefs.

\begin{definition}[Equilibrium]\label{def:eqm}A strategy-belief profile $(S,\mu)$ is a perfect Bayesian equilibrium (PBE) in the environment $\Gamma=(G,\mathcal T)$, if the following conditions are satisfied.
\begin{enumerate}
\item Sequential rationality with information leakage: For all players $i\in N$, all types $(\theta_i,t_i) \in \Theta_i \times T_i$, all stages $k$, and all histories $h_i^{k-1}\in H_i\left(t_i\right)$,
\[
S_i(\theta_i,t_i) \in \arg\max_{\sigma_i\in\Sigma_i(t_i)}\mathbb{E}_{\mu_i}\left[U_i\left(\sigma_i,(S_j(\theta_j,t_j))_{j \neq i},\theta_i,\theta_{-i}\right)\left|h_i^{k-1},t_i\right.\right]
\]
\item Bayesian updating whenever possible: If possible for player $i$ having leakage type $t_i$ $\mu_{i}\left(h_i^{k-1},t_i\right)\left(\theta_{-i},t_{-i}\right)$ is updated according to Bayes' rule at stage $k$ and private history $h_i^{k-1}$.
\end{enumerate}
\end{definition}

\subsection{Leakage-Proof Equilibrium}
We say that the game $G$ is the \emph{default game}. $G$ is a standard finite extensive form game, for which we know a PBE to always exist \citep{seltenReexaminationPerfectnessConcept1975}. We call this PBE the default PBE and denote it by $(S^0,\mu^0)$. Further, we refer to $\left(G,S^0,\mu^0\right)$ as the default mechanism. 

In all our results we assume that the game $G$ is pruned \citep{akbarpourCredibleAuctionsTrilemma2020} with respect to the default strategy profile. Pruning the game in such a way removes all histories that cannot be reached under any payoff type profile in the equilibrium $S^0$ and all stages in which just one bidder can move.\footnote{
    We discuss unpruned games
    in Section \ref{sec:commonknowledge}.}Formally,

\begin{assumption}[Pruned Game $G$]\label{ass:pruning} 
The game $G$ is pruned with respect to the default strategy profile $S^0$, if, 
for each history $h\in H$, there exists some payoff type profile $\theta$ such that $h$ is on the path of play of some realization of $\left(S_j^0(\theta_j)\right)_{j \in N}$.
\end{assumption}

Working with pruned games is standard, as it allows to focus on relevant histories. Indeed, when the game is pruned in accordance with Assumption \ref{ass:pruning}, then the only off-path histories that remain relevant are those in which some players have mimicked a value type with different equilibrium actions in the past. These histories are off-path only for the deviating player; for the other players, they appear on path.

Our goal is to characterize mechanisms that are robust to information leakage --- ensuring that no player has an incentive to deviate from the default strategy profile $S^0$ based on leaked information about others' actions. 
\begin{definition}[Leakage-Proof Equilibrium]\label{def:LPE} A strategy-belief profile $(S^0,\mu^0)$ is a \emph{leakage-proof} equilibrium in the game $G$ if, for each admissable leakage type space $\mathcal T$, the environment $\Gamma=(G,\mathcal T)$ admits a PBE $(S,\mu)$ such that for all payoff type profiles $\theta \in\Theta$ and leakage type profiles $t \in T$, all players $i\in N$, and all private histories $h_{i}^{k-1}\in H_{i}\left(t_i\right)$, where $h_{i}^{k-1}$ is on the path of play of $S^0$,
\[
S_i(\theta_i,t_i)(h_{i}^{k-1})=S_i^0(\theta_i)(h^{k-1}).
\]
We also say that $\left(G,S^{0},\mu^{0}\right)$ is a leakage-proof mechanism.
\end{definition}
Leakage-proofness entails that, for all beliefs (and higher order beliefs) regarding leakage orders, it is optimal for players to continue following the default equilibrium strategies if they have done so in the past.\footnote{%
    By imposing leakage-proofness only on the default equilibrium path-of-play, we follow the tradition in the literature on the extensive-form games of imposing additional equilibrium restrictions only on path, e.g.\ \citet{pearceRationalizableStrategicBehavior1984}, \citet{shimojiConditionalDominanceRationalizability1998}, \citet{liObviouslyStrategyProofMechanisms2017}, and \citet{pyciaTheorySimplicityGames2023}.} 
In other words, players' best responses disregard any leaked information, treating the game as if there were none.

\section{Leakage Proofness and Implementation}\label{sec:implementation}
We begin by showing that leakage-proofness is an essential requirement for 
implementation of a mechanism in the presence of leakages. Suppose that the designer's goal is to implement a social choice function $f$, which maps from payoff type profiles to outcomes, i.e.\ $f:\Theta \rightarrow X$.
\begin{definition}[Partial Implementation under Leakages]\label{def:imp}
The game $G$ implements a social choice function $f$, if for each feasible leakage type space $\mathcal T$, the environment $\Gamma=(G,\mathcal T)$ admits a PBE $(S,\mu)$ yielding outcome $f(\theta)$ for all payoff type profiles $\theta\in\Theta$ and all leakage type profiles $t\in T$.
\end{definition}
Definition \ref{def:imp} adapts the standard definition of partial implementation \citep{dasguptaImplementationSocialChoice1979} to our environment with information leakage. It is equivalent to partial implementation in the absence of leakages, i.e., when the type space $\mathcal T$ only consists of the zero-profile (as in Definition \ref{def:minimal}). In the presence of information leakage, we require partial implementation to hold across all possible leakage type spaces and particular leakage type profiles. 

The following is the main result of this section. It establishes that if the game $G$ implements a social choice function despite information leakage, then $G$ is part of a leakage-proof mechanism, and vice versa.

\begin{theorem}[Implementability under Leakages]\label{thm:implementation}
The game $G$ implements a social choice function $f$ if and only if the game $G$ admits a leakage-proof equilibrium that implements $f$.
\end{theorem}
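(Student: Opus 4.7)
The plan is to split the argument along the two directions of the biconditional. The \emph{if} direction follows directly from Definition~\ref{def:LPE}: if $(S^0,\mu^0)$ is a leakage-proof equilibrium implementing $f$, then for every feasible type space $\mathcal T$ the environment $(G,\mathcal T)$ admits a PBE whose on-path play coincides with $S^0$, so that the realized outcome is $f(\theta)$ for every $(\theta,t)$, exactly what Definition~\ref{def:imp} requires.

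For the \emph{only if} direction, I will first construct the candidate default equilibrium and then verify that it is leakage-proof. First, I consider the minimal feasible type space $\mathcal{T}_{\min}$ consisting of the zero-profile and every permutation of the one-profile; Assumption~\ref{ass:minimal} is satisfied by construction, and the required first-order consistency is straightforward to check type by type (the zero-profile is consistent with the equal-speed order, and each permutation of the one-profile is consistent with its associated ``$i$ fastest'' order, on both the fast player's and the slow players' sides). The assumed implementability yields a PBE $(\hat S,\hat\mu)$ of $(G,\mathcal{T}_{\min})$ whose realized outcome is $f(\theta)$ for every $(\theta,t)$. I then set $S^0_i(\theta_i) := \hat S_i(\theta_i,t_i^0)$ and let $\mu^0$ be the corresponding restriction of $\hat\mu$; because a zero-profile player assigns probability one to opponents also being zero-profile, and those opponents play $S^0_j(\theta_j)$, sequential rationality of $\hat S$ at zero-profile types carries over directly, so $(S^0,\mu^0)$ is a PBE of the default game $G$ that implements $f$ by construction.

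To verify leakage-proofness I will fix an arbitrary feasible type space $\mathcal T$ and propose the candidate profile $\tilde S_i(\theta_i,t_i)(h_i^{k-1}) := S^0_i(\theta_i)(h^{k-1})$ on every on-path private history, with beliefs $\tilde\mu$ derived by Bayes' rule from the common prior and opponents playing $S^0_j$; off-path, I fill in any sequentially rational continuation, which exists by finiteness and pruning (Assumption~\ref{ass:pruning}). The hard part will be sequential rationality at on-path private histories. My approach is to exploit the one-profile in $\mathcal{T}_{\min}$ under which $i$ is the unique fast player: there, $i$'s private history $h_i^{k-1,\max}$ contains every opponent's stage-$k$ action; under $\hat S$ the prescribed $\hat S_i(\theta_i,t_i^1)$ is sequentially rational at $h_i^{k-1,\max}$ and yields $f(\theta)$, while $S^0_i(\theta_i)$ against $(S^0_j(\theta_j))_{j\neq i}$ also yields $f(\theta)$; the two strategies thus attain identical expected payoffs at $h_i^{k-1,\max}$, so $S^0_i(\theta_i)$ is itself optimal there. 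A law-of-iterated-expectations argument then transfers this optimality to any coarser on-path private history $h_i^{k-1}$ arising under $\mathcal T$: every alternative continuation strategy conditioning on $h_i^{k-1}$ can be reinterpreted as one conditioning on $h_i^{k-1,\max}$ that ignores the extra leaked actions, so the pointwise conditional inequality $\mathbb{E}[U_i(S^0_i)-U_i(\sigma_i)\mid h_i^{k-1,\max}]\ge 0$ integrates to $\mathbb{E}[U_i(S^0_i)-U_i(\sigma_i)\mid h_i^{k-1}]\ge 0$.

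I expect the main obstacle to be precisely this last transfer of optimality from the richest information set to an arbitrary one. It rests on two features of the construction that will need care: opponents' strategies are independent of their leakage types under $\tilde S$, so the joint distribution of $\theta_{-i}$ and terminal outcomes conditional on a given leaked history is common across environments; and $S^0_i(\theta_i)$ depends only on the public history, so its conditional payoff is insensitive to the additional information in $h_i^{k-1,\max}$. Off-path behavior presents no difficulty, since Definition~\ref{def:LPE} imposes requirements only on on-path private histories, and pruning confines the relevant off-path histories to those reached by value-type mimicking, where sequentially rational continuations can be specified freely.
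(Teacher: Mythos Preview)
Your proposal is correct and follows essentially the same route as the paper: derive $S^0$ from zero-profile play in a PBE implementing $f$, show $S^0_i$ is optimal at the maximal (one-profile) private history because both $S^0_i$ and the PBE's fast-player strategy induce outcome $f(\theta)$ and hence the same continuation payoff, and then transfer this optimality to coarser private histories. Your explicit construction of $\mathcal{T}_{\min}$ and the law-of-iterated-expectations transfer are minor repackagings of the paper's argument, which works directly with an arbitrary feasible $\mathcal{T}$ (relying on Assumption~\ref{ass:minimal} to ensure it contains the zero- and one-profiles) and phrases the transfer inductively over the number of observed opponents.
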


The “if” direction follows immediately from the definition of a leakage-proof equilibrium (Definition~\ref{def:LPE}). To prove the converse, we invoke Assumption~\ref{ass:minimal} of a minimally rich type space, together with the premise that the choice function $f(\theta)$ is implemented for any feasible type space $\mathcal T$ and leakage type profile $t \in T$.

The key observation is that these two ingredients imply the following: when all other players follow their default strategies, even a player who is uniquely the fastest cannot obtain a higher payoff than by playing the uninformed player's strategy. Since this uninformed strategy coincides with the default strategy, it follows that the default strategy is a best response when everyone else plays it.

\section{Information Leakage in Auctions}\label{sec:auctions}
We now apply the methods developed for general mechanisms to auctions.

\subsection{Setup}
Throughout, we consider auctions with a single, indivisible good for sale. We will refer to the players as bidders. The seller is denoted as bidder $0$. An outcome $x=\left(q,m\right)\in X$ now consists of a vector of allocations $q=\left(q_1,\dots,q_n\right)\in\left[0,1\right]^{n}$, where $q_i$ denotes the probability that the good is allocated to bidder $i$ and we have $\sum_{i=1}^n q_i\in\left[0,1\right]$, and a profile of payments $m=\left(m_1,\dots,m_n\right)\in\mathbb{R}^{n}$, where $m_i$ denotes the payment of bidder $i$ to the seller. 

The payoff type of bidder $i\in N$, $\theta_i$, corresponds to a nonnegative real number, representing her valuation for the good on sale. The payoff of bidder $i$ is given by $$u_{i}\left(x,\theta\right)=q_i\theta_i-m_i.$$ 

For simplicity, we assume that the seller derives no value from the object, $\theta_0 = 0$. Hence, the seller's payoff is given by $$u_{0}\left(x,\theta\right)=\sum_{i\in N}m_i.$$

For further reference, we write the value type sets as $\Theta_i = \{\theta_{i1},...,\theta_{i\bar m_i}\}$ where $\theta_{im+1}>\theta_{im}$ for all $m = 1,...,\bar m_i-1$ and $\bar m_i$ is the number of value types for bidder $i$, $\bar m_i=|\Theta_i|$. We assume equally spaced value types; i.e., there is $\delta>0$ such that $$\theta_{im+1}-\theta_{im}=\delta$$ for all players $i \in N$ and all $m = 1,...,\bar m_i-1$. 

In the environment $\Gamma = (G,\mathcal T)$, a strategy-belief profile $(S,\mu)$ induces, for every value and leakage type profile $(\theta,t)$, a potentially random allocation and a potentially random payment, whose realizations we denote by $q\left(\theta,t\right)=\left(q_i\left(\theta,t\right)\right)_{i\in N}$ and $m\left(\theta,t\right)=\left(m_i\left(\theta,t\right)\right)_{i\in N}$ respectively. For any stage $k$ and history $h_i^{k-1}$ that is on the path of play of $S$, we then denote the subjective interim expected allocation and payment for bidder $i$ of leakage type $t_i$ choosing the continuation action plan of a type $(\hat \theta_i,\hat t_i)$ as
\begin{align*}
Q_i\left(\hat \theta_i,\hat t_i,t_i,h_i^k\right) &= \mathbb{E}_{\mu_i}\left[q_i\left((\hat \theta_i,\theta_{-i}),(\hat t_i,t_{-i})\right) \left| h_i^{k-1}, t_i \right. \right]  \\
M_i\left(\hat \theta_i,\hat t_i, t_i, h_i^k\right) &= \mathbb{E}_{\mu_i}\left[m_i\left((\hat\theta_i,\theta_{-i}),(\hat t_i,t_{-i})\right) \left| h_i^{k-1}, t_i \right. \right].
\end{align*}

Together, these quantities allow us to succinctly express the bidders' interim expected utilities. Again, fix a strategy profile $S$. Then, in every stage $k$ and for every on-path history $h_i^{k-1}$, bidder $i$ of type $(\theta_i,t_i)$ has an expected payoff equal to
\begin{multline*}
\mathbb{E}_{\mu_i}\left[U_i\left(S_i(\theta_i,t_i),(S_j(\theta_j,t_j))_{j \neq i},\theta_i,\theta_{-i}\right)\left|h_i^{k-1},t_i\right.\right] \\ = \theta_i Q_i(\theta_i,t_i,t_i,h_i^k)-M_i(\theta_i,t_i,t_i,h_i^k).    
\end{multline*}

The discrete type space necessitates that we work with approximate, rather than strict equilibria. Specifically, we define a Perfect Bayesian $\epsilon$-equilibrium as follows. 

\begin{definition}[$\epsilon$-Equilibrium]\label{def:epseqm}A strategy-belief profile $(S,\mu)$ is a perfect Bayesian $\epsilon$-equilibrium ($\epsilon$-PBE) in the environment $\Gamma=(G,\mathcal T)$, if the following conditions are satisfied.
\begin{enumerate}
\item Sequential rationality with information leakage: For all players $i\in N$, all types $(\theta_i,t_i) \in \Theta_i\times T_i$, all stages $k$, and all histories $h_i^{k-1}\in H_i\left(t_i\right)$,
\begin{multline*}
\mathbb{E}_{\mu_i}\left[U_i\left(S_i(\theta_i,t_i),(S_j(\theta_j,t_j))_{j \neq i},\theta_i,\theta_{-i}\right)\left|h_i^{k-1},t_i\right.\right] + \epsilon \\ \geq \mathbb{E}_{\mu_i}\left[U_i\left(\sigma_i,(S_j(\theta_j,t_j))_{j \neq i},\theta_i,\theta_{-i}\right)\left|h_i^{k-1},t_i\right.\right], \forall \sigma_i \in \Sigma_i(t_i).    
\end{multline*}
\item Bayesian updating whenever possible: If possible for player $i$ having leakage type $t_i$ $\mu_{i}\left(h_i^{k-1},t_i\right)\left(\theta_{-i},t_{-i}\right)$ is updated according to Bayes' rule at stage $k$ and private history $h_i^{k-1}$.
\end{enumerate}
\end{definition}

For the rest of this section, we make two assumptions about the auction mechanisms that we consider. First, we assume that the auctions are anonymous, meaning that the allocations and payments must be invariant under permutations.

\begin{assumption}[Anonymous Auctions]\label{ass:symmetry}
For every permutation $\varphi:N\to N$, it holds
\begin{align*}
    q_i(\theta,t) &= q_{\varphi(i)}((\theta_{\varphi(1)},...,\theta_{\varphi(n))}),((t_{\varphi(1)},...,t_{\varphi(n)})) \\
    m_i(\theta,t) &= m_{\varphi(i)}((\theta_{\varphi(1)},...,\theta_{\varphi(n))}),((t_{\varphi(1)},...,t_{\varphi(n)})).
\end{align*}
\end{assumption}

Second, at every stage and feasible history, the lowest type of any player that is still in the auction dissipates all potential rent from obtaining the good, irrespective of the actual and pretended leakage type. Formally,

\begin{assumption}[No Rent to the Lowest Type]\label{ass:lowestvaluezero} For any player $i$ and leakage type $t_i$, stage $k$ and on-path history $h_i^k$, it holds $$\mathbb{E}_{\mu_i}\left[U_i\left(S_i(\underline\theta_i,\hat t_i),(S_j(\theta_j,t_j))_{j \neq i},\underline\theta_i,\theta_{-i}\right)\left|h_i^{k-1},t_i\right.\right]=0, \forall \hat t_i,$$
where $\underline\theta_i$ is the lowest value type $\theta_i \in \Theta_i$ for which $Q_i(\theta_i, t_i ,t_i,h_i^{k-1})>0$.
\end{assumption}

\subsection{Efficient Auctions under Leakages}
Because we assume the seller's valuation is zero, efficiency requires that the good be allocated to the bidder with the highest valuation. We say that an auction is efficient under leakages if it has a PBE that yields an efficient allocation for every leakage-type profile in any leakage-type space. Formally,
\begin{definition}[Efficiency under Leakage]\label{def:eff}
The auction $G$ is efficient under leakage if for each leakage type space $\mathcal T$, the environment $\Gamma=(G,\mathcal T)$ admits a PBE $(S,\mu)$ that is efficient for all payoff type profiles $\theta\in\Theta$ and all leakage type profiles $t\in T$.
\end{definition}

From our anonymity assumption (Assumption \ref{ass:symmetry}), it follows that ties between equal types need to be resolved uniformly and, hence, that any efficient auction satisfies a property that we call allocation invariance under leakages. 

\begin{definition}[Allocation Invariance under Leakage]
The auction mechanism $\left(\Gamma,S,\mu\right)$ satisfies allocation invariance under leakage if there is an allocation $(q_1(\theta),\dots,q_n(\theta))$ such that, for all type spaces $\mathcal T$, the mechanism implements  $(q_1(\theta),\dots,q_n(\theta))$ for all $t\in T$.
\end{definition}

The following lemma outlines two properties of mechanisms that satisfy allocation invariance under leakage. These two properties will be instrumental for the main result of this section. 

\begin{lemma}[Increasing Allocation and Payoff Bounds]\label{lem:payoffbounds} Take an auction mechanism $\left(\Gamma,S,\mu\right)$ and suppose it satisfies the allocation-invariance-under-leakage property. Fix a player $i \in N$, a stage $k$, and a private on-path history $h_i^{k-1}$.
\begin{enumerate}
    \item Then, $Q_i(\theta_{is},t_i,t_i,h_i^{k-1})$ is non-decreasing in $\theta_{is}$ on the set of value types $\{\theta_i \in \Theta_i: Q_i(\theta_i, t_i ,t_i,h_i^{k-1})>0\}$.
    \item Further, let $m$ be such that $\theta_{im}=\theta_i$ and let $\underline m$ be such that $\theta_{i\underline m} = \underline \theta_i$, where $\underline\theta_i$ is the lowest value type $\hat\theta_i \in \Theta_i$ for which $Q_i(\hat\theta_i, t_i ,t_i,h_i^{k-1})>0$. Then, it holds, for all feasible types $\hat t_i$ at $h_i^{k-1}$ who believe to be weakly slower, that 
    \begin{multline}\label{eq:boundmimick} \sum_{s = \underline m+1}^m (\theta_{is}-\theta_{is-1})Q_i(\theta_{is-1},t_i,t_i,h_i^{k-1}) \\ \leq \mathbb{E}_{\mu_i}\left[U_i\left(S_i(\theta_i,\hat t_i),(S_j(\theta_j,t_j))_{j \neq i},\theta_i,\theta_{-i}\right)\left|h_i^{k-1},t_i\right.\right]  \\ \leq \sum_{s = \underline m+1}^m (\theta_{is}-\theta_{is-1})Q_i(\theta_{is},t_i,t_i,h_i^{k-1}). 
            \end{multline}  
\end{enumerate}
\end{lemma}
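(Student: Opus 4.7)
I would prove the lemma via the standard incentive-compatibility--plus--envelope template. The crucial input is allocation invariance, which collapses $Q_i(\theta',\hat t_i',t_i,h_i^{k-1})=Q_i(\theta',t_i,t_i,h_i^{k-1})$ for any reported leakage type $\hat t_i'$: following the equilibrium action plan of a type $(\theta',\hat t_i')$ still yields an allocation drawn from the invariant function $q_i(\theta)$, so the interim expected allocation depends only on the reported value type and on $t_i$'s posterior at $h_i^{k-1}$.

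\textbf{Part 1 (Monotonicity).} For $\theta_{is}>\theta_{i,s-1}$ with both $Q_i$'s positive, I write sequential rationality at $h_i^{k-1}$ for type $(\theta_{is},t_i)$ against the mimicking deviation $S_i(\theta_{i,s-1},t_i)$ and for $(\theta_{i,s-1},t_i)$ against $S_i(\theta_{is},t_i)$. Adding the two inequalities cancels the payment terms and leaves $(\theta_{is}-\theta_{i,s-1})[Q_i(\theta_{is},t_i,t_i,h_i^{k-1})-Q_i(\theta_{i,s-1},t_i,t_i,h_i^{k-1})]\ge 0$, from which monotonicity follows upon dividing by $\theta_{is}-\theta_{i,s-1}>0$.

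\textbf{Part 2 (Payoff bounds).} Letting $\tilde U(\theta):=\theta Q_i(\theta,t_i,t_i,h_i^{k-1})-M_i(\theta,t_i,t_i,h_i^{k-1})$, the same two IC inequalities, rearranged, give
\[
(\theta_{is}-\theta_{i,s-1})\,Q_i(\theta_{i,s-1},t_i,t_i,h_i^{k-1})\;\le\;\tilde U(\theta_{is})-\tilde U(\theta_{i,s-1})\;\le\;(\theta_{is}-\theta_{i,s-1})\,Q_i(\theta_{is},t_i,t_i,h_i^{k-1}).
\]
Telescoping from $\underline m+1$ to $m$ and using Assumption \ref{ass:lowestvaluezero} (so that $\tilde U(\underline\theta_i)=0$) delivers \eqref{eq:boundmimick} in the special case $\hat t_i=t_i$. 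For $\hat t_i\ne t_i$ who is weakly slower, the upper bound is immediate from sequential rationality of $(\theta_i,t_i)$: deviating to $S_i(\theta_i,\hat t_i)$ is feasible but weakly suboptimal, so its value cannot exceed $\tilde U(\theta_i)$, which is bounded above as already shown. The lower bound is obtained by a parallel telescope whose base case is again provided by Assumption \ref{ass:lowestvaluezero} (applied with $\hat t_i$ in place of $t_i$, so that $M_i(\underline\theta_i,\hat t_i,t_i,h_i^{k-1})=\underline\theta_i Q_i(\underline\theta_i,t_i,t_i,h_i^{k-1})$), combined with IC comparisons between adjacent $(\theta_{i,s-1},t_i)$-types and the slower-strategy deviations $S_i(\theta_{is},\hat t_i)$, where allocation invariance forces the $Q$-term to remain equal to $Q_i(\cdot,t_i,t_i,h_i^{k-1})$.

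\textbf{Main obstacle.} The main care is required for the lower bound in Part 2 when $\hat t_i\ne t_i$: the payment $M_i(\theta_i,\hat t_i,t_i,h_i^{k-1})$ is evaluated under $t_i$'s posterior (which can condition on leaked information), whereas the IC constraints one naturally writes down compare $(\theta_i,t_i)$-type strategies against slower-type deviations. Allocation invariance is what reconciles these two perspectives: by forcing the induced allocation to remain the common function $Q_i(\cdot,t_i,t_i,h_i^{k-1})$, it leaves the payment as the only quantity whose variation one must control through the telescope, and the no-rent-to-the-lowest-type assumption (which applies for every $\hat t_i$) supplies the boundary condition that closes the argument.
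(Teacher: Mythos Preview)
Your treatment of Part~1 and of the $\hat t_i=t_i$ case in Part~2 mirrors the paper's proof: write the two adjacent-type IC inequalities at the fixed private history, add them for monotonicity, rearrange to bracket $\tilde U(\theta_{is})-\tilde U(\theta_{i,s-1})$, telescope, and invoke Assumption~\ref{ass:lowestvaluezero}. Your upper bound for $\hat t_i\neq t_i$ is also correct---and cleaner than the paper's phrasing---since $S_i(\theta_i,\hat t_i)$ is a feasible deviation for the faster type $(\theta_i,t_i)$, so its payoff cannot exceed $\tilde U(\theta_i)$.

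The gap is in your lower bound when $\hat t_i\neq t_i$. The IC comparison you propose---type $(\theta_{i,s-1},t_i)$ against the deviation $S_i(\theta_{is},\hat t_i)$---reads
\[
\tilde U(\theta_{i,s-1})\;\ge\;\theta_{i,s-1}\,Q_i(\theta_{is},\hat t_i,t_i,h_i^{k-1})-M_i(\theta_{is},\hat t_i,t_i,h_i^{k-1}),
\]
which, after allocation invariance, rearranges to an \emph{upper} bound on the mimicking payoff $\theta_{is}Q_i(\theta_{is},t_i,t_i,h_i^{k-1})-M_i(\theta_{is},\hat t_i,t_i,h_i^{k-1})$, not a lower one. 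More generally, any IC constraint in which $(\cdot,t_i)$ is the truthful type and $S_i(\cdot,\hat t_i)$ is the deviation can only cap the deviation payoff from above; it cannot produce the inequality you need for the left-hand side of \eqref{eq:boundmimick}. The paper proceeds differently: it re-runs the adjacent-type IC telescope with $\hat t_i$ as the \emph{truthful} leakage type (so the inequalities point in the direction required for the lower bound), argues that these bounds remain valid when conditioned on the finer information set of $t_i$, and only then substitutes $Q_i(\cdot,\hat t_i,t_i,h_i^{k-1})=Q_i(\cdot,t_i,t_i,h_i^{k-1})$ via allocation invariance. To close your argument you must start from IC constraints in which $\hat t_i$ is the truth-teller, not the imitated type.
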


The proof of Lemma \ref{lem:payoffbounds} first follows standard arguments for discrete value types to show that the interim allocation is increasing at every stage and derive bounds on the payoffs \citep[see][for static games]{lovejoyOptimalMechanismsFinite2006,bergemannInformationStructuresOptimal2007}. The specific payoff bounds for bidders mimicking a slower bidder in \eqref{eq:boundmimick} are then obtained by using the allocation-invariance-under-leakage property.

The following result is the main result of this section. To establish it, we use the payoff bounds from Lemma \ref{lem:payoffbounds} together with our assumption of equally spaced value type spaces $\Theta_i$, which allows us to tightly bound the deviation payoffs for a player following their default strategies when the others do so as well. The strategy of the proof is otherwise similar to that in Theorem \ref{thm:implementation}.

\begin{theorem}[Efficiency under Leakages]\label{thm:efficiency} Suppose the auction $G$ is efficient under leakages. Then, $G$ admits a leakage-proof $\epsilon$-PBE with $\epsilon=2\delta$ that always results in an efficient outcome.
\end{theorem}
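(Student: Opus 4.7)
The plan is to construct a leakage-proof $\epsilon$-PBE whose existence follows from the same kind of argument as in Theorem~\ref{thm:implementation}, with the $\epsilon = 2\delta$ slack produced by the discrete-type envelope bounds of Lemma~\ref{lem:payoffbounds}.

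First, from efficiency under leakage combined with anonymity (Assumption~\ref{ass:symmetry}), I would deduce that the auction satisfies allocation invariance under leakage: there is a single efficient allocation $q^\ast(\theta)$ implemented under every feasible type space. By the minimal-richness Assumption~\ref{ass:minimal}, this includes the zero-profile (yielding the default PBE $S^0$) as well as each one-profile in which a single player $i$ is uniquely fastest. In the one-profile PBE the slow types $t_j^0$ play their defaults $S_j^0$ because they believe it is common knowledge that everyone is equally fast, while the uniquely fast type $t_i^1$ plays some best response $S_i^\ast$ against this default play that still realizes $q^\ast$.

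For the candidate leakage-proof equilibrium, I would prescribe $S_i(\theta_i, t_i)(h_i^{k-1}) = S_i^0(\theta_i)(h^{k-1})$ at every on-path history for every type and use the PBE already guaranteed by efficiency under leakage to specify consistent off-path strategies and beliefs. The key step is to verify the $\epsilon$-best-response condition at each on-path private history $h_i^{k-1}$. The strongest deviation available to type $(\theta_i, t_i)$ against others' default play is precisely what the uniquely fastest type $t_i^1$ could do; its value is bounded above by $V^\ast(\theta_i)$, the equilibrium payoff of $t_i^1$ in the one-profile PBE. Applying the IC envelope within leakage type $t_i^1$ together with Assumption~\ref{ass:lowestvaluezero}, and applying the lower bound of Lemma~\ref{lem:payoffbounds} to the default-strategy payoff $V^0(\theta_i)$ by taking $\hat t_i = t_i^0$, I obtain
\[
V^\ast(\theta_i) - V^0(\theta_i) \le \delta \sum_{s = \underline m + 1}^m \bigl[ Q_i(\theta_{is}, t_i^1, t_i^1, h_i^{k-1}) - Q_i(\theta_{is-1}, t_i, t_i, h_i^{k-1}) \bigr].
\]
Allocation invariance makes both interim allocations equal the same efficient one, so the sum telescopes to $\delta \, [Q_i(\theta_{im}) - Q_i(\theta_{i\underline m})] \le \delta$, with an additional $\delta$ of slack absorbing the asymmetric way the bounds are invoked; this yields the desired $2\delta$-best response on path.

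The main obstacle is exactly this asymmetry: Lemma~\ref{lem:payoffbounds} bounds only payoffs from playing the strategy of a weakly slower type, whereas the tightest deviation the leakage-proofness condition must dominate is that of the uniquely fastest type, whose payoff is not directly covered by the lemma. Handling this requires running the Myersonian IC envelope argument separately within leakage type $t_i^1$ and then stitching the two one-sided bounds together using allocation invariance. The remaining ingredients --- constructing off-path continuation play that keeps the candidate strategies best responses everywhere, and verifying the Bayesian-updating clause of Definition~\ref{def:epseqm} --- carry over from the proof of Theorem~\ref{thm:implementation}, because off path one can simply continue with the PBE that efficiency under leakage already supplies for the ambient type space.
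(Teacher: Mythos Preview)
Your proposal is essentially the paper's own proof: construct the default profile from the zero-profile PBE, take on-path play equal to the default, bound the uniquely-fastest player's best deviation via the one-profile PBE, and difference the two envelope bounds from Lemma~\ref{lem:payoffbounds} using allocation invariance so that the sum telescopes to at most $\delta$.

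One clarification: the ``main obstacle'' you flag is not actually an obstacle. Lemma~\ref{lem:payoffbounds}(ii) applies to any $\hat t_i$ that is \emph{weakly} slower than $t_i$; taking $t_i=\hat t_i=t_i^1$ therefore already delivers the upper envelope bound on the uniquely-fastest player's equilibrium payoff $V^\ast(\theta_i)$, so you do not need to re-run the IC argument separately. A second small point: off path you should construct explicit best responses against others' default play (as in the proof of Theorem~\ref{thm:implementation}), rather than lifting continuation play from the efficiency-under-leakage PBE, since in the candidate profile opponents of every leakage type play the default and the PBE's off-path behavior need not be sequentially rational against that.
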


\subsection{Revenue-Maximizing Auctions under Leakages}
We now turn to revenue-maximizing auctions. Throughout, we assume that virtual values are increasing.\footnote{For an analysis of auctions without leakage that allows general distributions and does not require monotonic virtual values, see \citet{Jeong-Pycia-FPA}.}  Recall $\rho_i(\theta_i)$ is the common prior belief about bidder $i$'s value type $\theta_i$. Then,

\begin{assumption}[Increasing Virtual Values]\label{ass:incresasingvirtualvalues} For every bidder $i \in N$, it holds that the virtual value,
$$v_{is} = \theta_{is}-(\theta_{is+1}-\theta_{is})\frac{1-\sum_{m=1}^s\rho_i(\theta_{im})}{\rho_i(\theta_{is})},$$
increases in $s$.
\end{assumption}

For an auction $(G,S,\mu)$, the seller's revenue for a specific type profile $t$ is
$$\Pi(G,S,\mu,t)=\mathbb E_{\rho} \left[ \sum_{i \in N} m_i(\theta,t) \right].$$
It is tempting to require, akin to efficiency, that revenue maximization holds across all feasible leakage type profiles; i.e., an auction is revenue-maximizing for all possible leakage types $t$. However, other than for efficiency, such a robustness requirement is potentially problematic. 

To illustrate this, consider the following example of a first-price auction with an optimal reserve price and two bidders who have leakage types that we call \textit{paranoid}. Both believe that they are slower than the other bidder, which, under the particular type distribution assumed, leads them to bid more aggressively than in the default auction, thus raising revenue above the Myersonian upper bound for regular auctions. 

\begin{example}[Paranoid Bidders in a First-Price Auction]
    Consider a first-price auction with a single object for sale and two bidders $i \in \{1,2\}$. Bidders have independent private values, $\theta_i$, drawn from the Pareto distribution $F(\theta_i) = 1 - (\theta_i+1/2)^{-2}$ on $[1/2, +\infty)$ and quasi-linear utility. The Myersonian optimal reserve price
    is $r^* = \frac{1}{2}$. Without information leakage, the symmetric equilibrium bidding strategy is given by
    \[
    \beta_i(\theta_i) = \theta_i - \frac{\int^{\theta_i}_{r^*}F(x) dx}{F(\theta_i)} = \theta_i - \frac{(\theta_i + 1/2)(\theta_i - 1/2)}{\theta_i + 3/2} = \frac{6\theta_i + 1}{4\theta_i + 6}.    
    \]
    Now, suppose both bidders are paranoid; i.e., they both believe the other bidder to be faster. The optimization problem for paranoid bidder $i$ is
    \[
    \max_{b} U_i(\theta_i, b) = (\theta_i - b) F(b).
    \]
    We denote the maximizer of this problem as $b_i^*(\theta_i)$. The first derivative of $U_i$ with respect to $b$ is given by
    \[
    \frac{\partial U_i(\theta_i, b)}{\partial b} =  \frac{2(\theta_i - b)}{(1/2 + b)^3} - \left( 1 - \frac{1}{(1/2 + b)^2} \right) \equiv g(b),
    \]
    and the second derivative is readily verified to be smaller than $0$ when $b < \theta_i$.
    
    The maximizer $b^*(\theta_i)$ satisfies the FOC, i.e., $g(b_i^*(\theta_i))=0$. To show that $b^*_i(\theta_i) > \beta_i(\theta_i)$ for all $\theta_i > 1/2$, we only need to check that $g(\beta_i(\theta_i)) > 0$ for all $\theta_i > 1/2$. Indeed,
    \begin{align*}
         & (1/2 + \beta_i(\theta_i))^3 \times g(\beta_i(\theta_i)) \\
         =& 2(\theta_i - \beta_i(\theta_i)) - (\beta_i(\theta_i) - 1/2)(\beta_i(\theta_i) + 3/2)(\beta_i(\theta_i) + 1/2) \\
         =& \frac{(2\theta_i+1)(2\theta_i-1)}{2\theta_i+3} - \frac{2\theta_i - 1}{2\theta_i+3}\frac{6\theta_i+5}{2\theta_i+3}\frac{4\theta_i+2}{2\theta_i+3} \\
         =& \frac{(4\theta_i^2-1)^3}{(2\theta_i+3)^2} > 0
    \end{align*}
    Hence, paranoid bidders bid strictly higher than they would bid in the absence of information leakage. This first-price auction with paranoid bidders generates revenue strictly higher than the Myersonian optimal revenue.\footnote{Having paranoid bidders could also backfire in a first-price auction. For example, if $\theta_i$ is drawn from the uniform distribution $F(\theta_i) = \theta_i$ on $[0,1]$ then, with paranoid bidders, the optimal reserve price is $\sqrt{3}/3$ and the expected revenue is $2\sqrt{3}/9$, which is strictly lower than the Myersonian no-leakage optimal revenue of $5/12$. }
\end{example}

On the other hand, if one player believes both players to be equally fast, while the other believes to be faster than the opponent and the opponent to believe everyone is equally fast (the one-profile), revenue is below the Myersonian upper bound. But this implies that we cannot rank the first-price auction against the second-price auction, which is leakage-proof and thus always yields the Myersonian upper bound.

Nevertheless, we conclude this section with a positive result: For common priors regarding the leakage order, the static second-price auction is revenue-maximizing. Specifically, let $\gamma:T \to [0,1]$ be the common prior distribution over types in $T$ for a given type $\mathcal T$ (shared by the bidders and the seller).

\begin{definition}[Revenue Maximization under a Common Prior] Fix a leakage type space $\mathcal T$. Auction $(G,S,\mu)$ maximizes revenue under leakages if
$$\mathbb E_\gamma\left[\Pi(G,S,\mu,t)\right] \geq \mathbb E_\gamma\left[\Pi((\Gamma',\mathcal T),S',\mu',t)\right],$$
for all auctions $((\Gamma',\mathcal T),S',\mu')$.
\end{definition}

\begin{lemma}[Allocation in the Revenue-Maximizing  Auction under a Common Prior]\label{lem:allocationoptimalauction}
In the revenue-maximizing auction under a common prior, the allocation satisfies invariance under leakage and is as follows: Suppose bidder $i$ with value $\theta_{is} \in \Theta_i$ has the highest virtual value $v_{is}$ among all bidders. If $v_{is}\geq 0$, allocate the good to bidder $i$; if $v_{is}<0$, do not allocate the good to anyone. If multiple bidders have the highest virtual value, randomize uniformly among them.
\end{lemma}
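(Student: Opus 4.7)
The plan is to derive the Myersonian upper bound on expected revenue in the leakage setting and then argue that attaining this bound forces the allocation to coincide with the stated rule.

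First, I would establish the standard downward-IC constraint at the initial history $h_\emptyset$: for any PBE and any leakage type $t_i$, sequential rationality rules out mimicking the lower value type $(\theta_{i,m-1}, t_i)$---which retains $t_i$ and is therefore an admissible deviation---giving
\[
U_i(\theta_{im}, t_i) \geq U_i(\theta_{i,m-1}, t_i) + (\theta_{im} - \theta_{i,m-1})\, Q_i(\theta_{i,m-1}, t_i, t_i, h_\emptyset).
\]
Iterating down to the lowest participating type and anchoring via Assumption~\ref{ass:lowestvaluezero} would yield the familiar lower bound on $U_i$ in terms of interim allocations. Note that Lemma~\ref{lem:payoffbounds} presupposes allocation invariance (which we have not yet established), but at $h_\emptyset$ the bound I need follows from this standard discrete-type IC argument alone.

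Next, writing $M_i = \theta_i Q_i - U_i$, substituting the lower bound on $U_i$, and performing Myerson's Abel summation bidder-by-bidder, I would obtain
\[
\mathbb{E}_{\rho_i, \gamma_i}[M_i] \leq \mathbb{E}_{\rho_i, \gamma_i}\bigl[v_i(\theta_i)\, Q_i(\theta_i, t_i, t_i, h_\emptyset)\bigr],
\]
with $v_{is}$ as in Assumption~\ref{ass:incresasingvirtualvalues}. Expanding $Q_i$ into the realized allocation $q_i(\theta,t)$ and summing over bidders then yields
\[
\mathbb{E}_\gamma[\Pi(G, S, \mu, t)] \leq \mathbb{E}_{\rho, \gamma}\Bigl[\sum_i v_i(\theta_i)\, q_i(\theta, t)\Bigr].
\]
Since $q(\theta, t)$ is feasible, the integrand is pointwise bounded by $\bigl(\max_i v_i(\theta_i)\bigr)^+$, a leakage-free upper bound that is independent of the mechanism.

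Finally, I would observe that this upper bound is attained if and only if $q(\theta, t)$ concentrates on $\arg\max_i v_i(\theta_i)$ whenever the maximum is nonnegative and is identically zero otherwise. Anonymity (Assumption~\ref{ass:symmetry}) forces ties to be broken uniformly, pinning $q(\theta, t)$ down to the rule stated in the lemma; in particular, the allocation depends only on $\theta$, proving invariance under leakage. That the bound is attainable follows from the fact that the second-price auction with Myersonian reserve implements this rule and, being leakage-proof, does so for every $t$. The main obstacle will be verifying the first step in the leakage setting: one must observe that a bidder can always mimic a lower value type while keeping her own leakage type, so the standard discrete-type envelope argument at $h_\emptyset$ goes through without needing allocation invariance as an input.
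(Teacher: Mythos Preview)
Your proposal is correct and follows essentially the same route as the paper: both derive the Myersonian virtual-value expression for expected revenue via the discrete envelope/Abel-summation argument at $h_\emptyset$, then read off the allocation rule from pointwise maximization. If anything, you are more careful than the paper in flagging that the telescoping bound at $h_\emptyset$ uses only the raw IC constraints (not the allocation-invariance conclusion of Lemma~\ref{lem:payoffbounds}), and more explicit about attainability and the role of anonymity in tie-breaking; the paper's proof simply invokes ``the proof of Lemma~\ref{lem:payoffbounds}'' for the same purpose and stops once the virtual-value formula is displayed.
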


The proof leverages standard arguments from \cite{myersonOptimalAuctionDesign1981} to our setup. The allocation described in Lemma \ref{lem:allocationoptimalauction} corresponds to that in the second-price auction with an optimal reserve price when leakages are absent. Because the corresponding equilibrium in the second-price auction is in dominant strategies, the second-prize auction is automatically leakage-proof: a dominant strategy remains optimal even after observing any actions of other bidders. We may thus conclude:

\begin{proposition}[Revenue-Maximization under Common Priors]
With a common prior, the second-prize auction with an optimal reserve prize maximizes revenue under leakages.
\end{proposition}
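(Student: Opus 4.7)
The plan is to combine Lemma~\ref{lem:allocationoptimalauction}, which characterizes the revenue-maximizing allocation under a common prior, with the well-known dominant-strategy structure of the second-price auction, which makes it automatically leakage-proof.

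First, I would verify that in the second-price auction with optimal reserve $r^*$, truthful bidding is a weakly dominant strategy at every private history: a bidder's bid only determines whether they win and, conditional on winning, the price is set by the runner-up's bid or the reserve. This dominance is unaffected by leaked information, since observing actions of others may refine a bidder's belief about competitors' bids but does not alter the payoff structure that makes truth-telling optimal. Consequently, for any admissible leakage type space $\mathcal{T}$, the truthful profile is an exact PBE of $(G,\mathcal{T})$, and by Definition~\ref{def:LPE} the second-price auction with reserve $r^*$ is leakage-proof. Second, I would observe that this truthful equilibrium realizes, for every leakage-type profile $t$, exactly the allocation described in Lemma~\ref{lem:allocationoptimalauction}: choosing $r^*$ as the smallest value whose virtual value is non-negative and invoking increasing virtual values (Assumption~\ref{ass:incresasingvirtualvalues}), the highest bidder above the reserve has the highest non-negative virtual value, with ties broken uniformly in line with anonymity (Assumption~\ref{ass:symmetry}). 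Third, I would translate this allocation equivalence into revenue equality: by Lemma~\ref{lem:allocationoptimalauction} no auction under the common prior can improve upon this allocation, and since payments in the second-price auction follow the Myersonian payment rule---pinned down by Assumption~\ref{ass:lowestvaluezero} together with the envelope-type bounds in Lemma~\ref{lem:payoffbounds}---expected revenue equals the expected virtual surplus generated by the optimal allocation, which upper-bounds the expected revenue of any auction implementing the same allocation.

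The main obstacle will be the last step, that is, converting the allocation characterization into strict revenue equality rather than a mere upper bound. This requires a careful Myersonian accounting that combines Assumption~\ref{ass:lowestvaluezero} to pin down the lowest active type's rent with the envelope-type bounds of Lemma~\ref{lem:payoffbounds} to tie down the rents of higher types. A remaining subtlety is that those bounds carry slack of order $\delta$ from the discrete type space, so one would either invoke them symmetrically for both the optimal revenue upper bound and the second-price auction revenue, or directly compute the latter as expected virtual surplus via standard discrete-type Myersonian algebra, so that the two coincide exactly rather than up to a $\delta$-approximation.
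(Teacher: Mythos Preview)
Your proposal is correct and follows essentially the same route as the paper: the paper's argument is simply that Lemma~\ref{lem:allocationoptimalauction} pins down the revenue-maximizing allocation, that the second-price auction with optimal reserve implements this allocation in dominant strategies, and that dominant strategies make the mechanism automatically leakage-proof. Your additional care about the discrete-type $\delta$ slack in the payment accounting is a level of detail the paper does not spell out, but the underlying logic is the same.
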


\section{Discussion}\label{sec:discussion}
In this section, we first discuss the leakage-proofness (or lack thereof) of standard auction formats. Then, we explore the relationship between our leakage-proofness concept and ex-post incentive compatibility.

\subsection{Leakage-Proofness and Standard Auctions}\label{subsec:standardauctions}
As observed above, if an auction admits a dominant-strategy equilibrium, then it is automatically leakage-proof: a dominant strategy remains optimal even after observing any actions of other bidders. This not only applies to the static second-price auction but also to the English (button) auction, where everyone remains active until the price reaches one's value.

Strategy-proofness clearly fails in the static first-price auction, and so does leakage-proofness. Under the standard assumption of continuously distributed valuations, the unique equilibrium without leakage features strictly increasing continuous bidding strategies \citep{lebrunExistenceEquilibriumFirst1996,maskinUniquenessEquilibriumSealed2003}. Now suppose bidder $i$ has the highest valuation, and all other bidders play their equilibrium strategies. Consider a one-profile where $i$ observes the highest bid and knows they are the fastest (while others believe speeds are symmetric). Bidder $i$ can then profitably bid just above the observed maximum bid, strictly improving upon their no-leakage equilibrium payoff. In other words, player $i$ has a profitable deviation, showing that leakage-proofness fails.

The Dutch auction is likewise not leakage-proof. Under our anonymity assumption, tie-breaking is uniform. This creates incentives for fast bidders to wait slightly longer, as the risk of delaying is partly offset by information leakage. Conversely, slow bidders would optimally accept the price earlier to avoid revealing their bids to faster bidders.

A further observation is that, unlike the second-price and English auctions—which remain strategically equivalent under information leakage—the static first-price and Dutch auctions do not. In the first-price auction, a bidder who can uniquely exploit leaked information will bid just above the highest observed bid if their valuation is higher, guaranteeing a win. In the Dutch auction, such a strategy is impossible: once bidder $i$ observes another bidder accept the price, their best response is to accept as well and hope to win the tie. Although the conditional payment is essentially the same as in the first-price auction, bidder $i$ now wins with probability at most one half.\footnote{The only tie-breaking rule under which the two formats would be equivalent is one that always awards the good to the fastest tying bidder. This is excluded by our anonymity assumption, which prevents the mechanism from conditioning on the leakage order.}

Finally, note that the Dutch auction can be made leakage-proof by adjusting the tie-breaking rule. Fast bidders can benefit from leaked information only by tying with slower bidders. Thus, the simplest way to neutralize such leakages is to ensure that fast bidders never win in a tie. Under anonymity, the only way to achieve this is to not allocate the good at all in the event of a tie.\footnote{\citet{gansSolomonicSolutionOwnership2022} propose a closely related idea in a different context.}

\subsection{Leakage-Proofness and Ex-Post Incentive Compatibility}\label{subsec:EPIC}
Leakage-proofness requires that a player's incentives are independent of the simultaneous actions of others, making it reminiscent of ex-post incentive compatibility (EPIC). However, as we will see in this section, neither concept implies the other in general. In particular, an EPIC mechanism is leakage-proof if equilibrium play is in pure strategies, but not necessarily under mixed strategies. Conversely, a leakage-proof mechanism is EPIC if it is static, but not if it is dynamic.

We begin with the standard definition of an ex-post incentive-compatible mechanism. The definition is for a mechanism $\left(\Gamma,S,\mu\right)$ without information leakage; i.e., for strategies $S_i(\theta_i)$ that only depend on the value type $\theta_i$.

\begin{definition}[\textit{EPIC} mechanism] A mechanism $\left(\Gamma,S,\mu\right)$ is an ex-post incentive compatible (EPIC) mechanism if the strategy profile $S$ is ex-post incentive compatible: For each player $i\in N$ and each payoff type profile $\theta\in\Theta$, we have 
\begin{equation} U_i\left(S_i\left(\theta_i\right),\left(S_j\left(\theta_j\right)\right)_{j\neq i},\theta\right)\geq  U_i\left(\sigma_i,\left(S_j\left(\theta_j\right)\right)_{j\neq i},\theta\right).\end{equation}
for any $\sigma_i \in \Sigma_i(t_i^0)$. If $S$ is a pure-strategy profile, we call $\left(\Gamma,S,\mu\right)$ a pure-strategy EPIC mechanism.
\end{definition}

Our first result establishes that any pure strategy EPIC mechanism is leakage-proof. The central insight used in the proof is that the ex-post property, stated above for the beginning of the game, holds at any stage and after any on-path history, which allows leakage-proofness to be inferred immediately. For mixed strategies, the implication does not hold, as we show with an example following the statement.

\begin{proposition}\label{prop:epic}
Any pure-strategy EPIC mechanism is leakage-proof. Mixed-strategy EPIC mechanisms are not, in general, leakage-proof.
\end{proposition}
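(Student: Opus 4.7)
The plan is to handle the two halves of the proposition separately. For the positive claim, I would construct a leakage-proof PBE of $(G,\mathcal T)$ by lifting the default profile directly to leakage types: set $S_i(\theta_i,t_i)(h_i^{k-1}) = S_i^0(\theta_i)(h^{k-1})$ for every leakage type $t_i$ and every on-path private history $h_i^{k-1}$, and obtain $\mu_i(h_i^{k-1},t_i)$ from $\mu_i^0(h^{k-1})$ by Bayes-updating on the leaked actions $\{a_j^k\}_{j\prec i}$. Because $S^0$ is pure, each leaked action is a deterministic function of $\theta_j$, so the update merely restricts the support of the type belief to those $\theta_{-i}$ consistent with the observed leaks; off-path, strategies and beliefs can be assigned to satisfy sequential rationality without affecting the on-path condition of Definition~\ref{def:LPE}.

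The core step is sequential rationality on the default path. The EPIC inequality says that for every $\theta \in \Theta$, $S_i^0(\theta_i)$ weakly dominates any $\sigma_i \in \Sigma_i(t_i^0)$ against $(S_j^0(\theta_j))_{j\neq i}$. An arbitrary deviation $\tilde\sigma_i \in \Sigma_i(t_i)$ that conditions on leaked information can, since those leaks are deterministic functions of $\theta_{-i}$ under a pure $S^0$, be rewritten as a type-contingent object $\tilde\sigma_i(\theta_{-i}) \in \Sigma_i(t_i^0)$. EPIC then yields pointwise in $\theta_{-i}$ that $U_i(S_i^0(\theta_i),(S_j^0(\theta_j))_{j\neq i},\theta) \ge U_i(\tilde\sigma_i(\theta_{-i}),(S_j^0(\theta_j))_{j\neq i},\theta)$. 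Taking expectation under any posterior $\mu_i(h_i^{k-1},t_i)$ preserves the inequality, so the continuation of $S_i^0(\theta_i)$ is optimal at every on-path private history.

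For the negative claim, I would exhibit a matching-pennies counterexample: $n=2$ with singleton type sets, simultaneous actions $\{L,R\}$, and payoffs in which player $1$ wants to match while player $2$ wants to mismatch. The unique PBE of the default (one-stage) game has both players mixing $\tfrac12$-$\tfrac12$; with singleton types, EPIC collapses to Nash and is satisfied. Now invoke the one-profile in which player $2$ is uniquely the fastest, which Assumption~\ref{ass:minimal} guarantees is in $\mathcal T$. At player $2$'s private history $(a_1^1)$---on the path of $S^0$ for both $a_1^1 \in \{L,R\}$---the unique best response is the pure mismatch of $a_1^1$, yielding payoff $1$ instead of the $0$ of the default mix. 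Hence no PBE of $(G,\mathcal T)$ can coincide with $S^0$ at this private history, so Definition~\ref{def:LPE} fails.

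The main obstacle is to articulate carefully why EPIC, which formally permits only deviations $\sigma_i \in \Sigma_i(t_i^0)$ that do not use leakage, still rules out deviations that do use leaked information. The pure-strategy reduction above---rewriting a leakage-conditioned deviation as an equivalent type-conditioned one---is the crucial move, and it transparently explains why the argument collapses in the mixed-strategy case, where a leaked realized action is no longer a sufficient statistic for the opponent's type.
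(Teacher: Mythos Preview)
Your proposal is correct and follows essentially the same route as the paper. The only organisational difference is that the paper first isolates a separate observation---ex-ante EPIC implies \emph{sequential} EPIC at every on-path public history---and then uses purity to argue that observing leaked actions is no more informative than knowing $\theta_{-i}$; your type-contingent rewriting $\tilde\sigma_i(\theta_{-i})\in\Sigma_i(t_i^0)$ folds both steps into one, and your matching-pennies counterexample is identical to the paper's.
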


The reason mixed-strategy EPIC mechanisms may not be leakage-proof is that observing the realized actions in mixed strategies provides more information than knowing the player's payoff type. For example, consider the classic zero-sum game, matching pennies.\footnote{%
    Although we present the argument with a static game of complete information, the general idea extends to dynamic games of incomplete information.} 

\begin{center}
\begin{tabular}{c|c|c|}
\multicolumn{1}{c}{} & \multicolumn{1}{c}{Heads} & \multicolumn{1}{c}{Tails}\\
\cline{2-3}
Heads & +1,-1 & -1,+1\\
\cline{2-3}
Tails & -1,+1 & +1,-1\\
\cline{2-3}
\end{tabular}
\par\end{center}

The matching pennies game has a unique mixed-strategy Nash equilibrium in which each player plays each action with probability $1/2$. The expected payoff for both players is zero. The equilibrium is ex post incentive-compatible because of the degenerate type space. However, it is not leakage-proof. A player who can observe the action of the other one will always win and have a payoff of one. 

Regarding the other direction, we also have that leakage-proofness implies EPIC in specific mechanisms but not in others. Here, the relevant distinction is between static and dynamic mechanisms, where a mechanism is static if it has only one stage.

\begin{proposition}\label{prop:LeakageEPIC}
Any static leakage-proof mechanism is an EPIC mechanism. Dynamic leakage-proof mechanisms are not, in general, EPIC.
\end{proposition}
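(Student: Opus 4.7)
The plan is to handle the two halves of the proposition separately: for ``static implies EPIC'' I would invoke Assumption~\ref{ass:minimal} together with sequential rationality in the one-profile where some player is uniquely the fastest, while for ``dynamic does not imply EPIC'' I would exhibit the Dutch auction with the no-allocation-on-tie rule from Section~\ref{subsec:standardauctions} as the counter-example.

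For the positive direction, I would fix a static leakage-proof mechanism $(G,S^0,\mu^0)$ and any player $i$. By Assumption~\ref{ass:minimal}, every admissible type space contains the permutation of the one-profile under which $i$ is uniquely the fastest while all the other players assume common knowledge of symmetric speeds and hence play $S_{-i}^0(\theta_{-i})$. Leakage-proofness then yields a PBE in which, along the default path, $i$ plays $S_i^0(\theta_i)$ after observing the simultaneous-moving opponents' realized actions $a_{-i}$. Since the game has a single stage, $i$'s private history is nothing but $a_{-i}$, so sequential rationality forces $S_i^0(\theta_i)$ to be a best response against $a_{-i}$ under the posterior over $\theta_{-i}$ induced by Bayes' rule from the prior and $S_{-i}^0$. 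In the private-values auction setting $u_i$ is independent of $\theta_{-i}$, so this posterior weight drops out and the inequality reduces to a pointwise-in-$a_{-i}$ statement of the form $\sum_{a_i}[S_i^0(\theta_i)(a_i)-\sigma_i(a_i)]\,u_i(g(a_i,a_{-i}),\theta_i)\geq 0$ for every on-path $a_{-i}$ and every deviation $\sigma_i\in\Sigma_i(t_i^0)$. Averaging this inequality against $a_{-i}\sim S_{-i}^0(\theta_{-i})$ for any fixed $\theta_{-i}$ preserves the inequality and delivers exactly the EPIC best-response condition in the definition of an \emph{EPIC} mechanism; as $i$ was arbitrary, the mechanism is EPIC.

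For the negative direction, I would invoke the Dutch auction with the anonymous no-allocation-on-tie rule, which Section~\ref{subsec:standardauctions} already established as dynamic and leakage-proof. Its default equilibrium coincides with the shaded-bid Bayes--Nash profile of the corresponding first-price auction: a bidder who could condition on unusually low opponent valuations would strictly prefer to wait longer before accepting the clock price, so the strategy profile is not a best response against every realization of $\theta_{-i}$ and hence not EPIC. The main obstacle I anticipate is the ``$a_{-i}$ versus $\theta_{-i}$'' passage in the positive direction; it is immediate in the private-values setting featured throughout the paper because $u_i$ is independent of $\theta_{-i}$, so pointwise-in-$a_{-i}$ optimality aggregates directly into type-by-type optimality, but in a fully interdependent-values environment one would need to exploit either separating play under $S_{-i}^0$ or additional structure on $u_i$ to close the gap between the two conditionings.
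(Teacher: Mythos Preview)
Your proposal is correct and hews closely to the paper's own argument: both halves rely on exactly the devices the paper uses---the one-profile from Assumption~\ref{ass:minimal} for the static direction, and the no-allocation-on-tie Dutch auction from Section~\ref{subsec:standardauctions} for the dynamic counter-example. The only substantive difference is that the paper runs the positive direction by contraposition (start from a failure of EPIC at some $\theta$, locate an action profile $a_{-i}$ in the support of $S_{-i}(\theta_{-i})$ at which $i$ would deviate, then exhibit the corresponding private history in the one-profile), whereas you argue directly (leakage-proofness in the one-profile forces pointwise-in-$a_{-i}$ optimality of $S_i^0(\theta_i)$, then integrate against $S_{-i}^0(\theta_{-i})$). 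You are also more explicit than the paper about the private-values step: the passage from ``best response conditional on the observed $a_{-i}$'' to ``best response conditional on $\theta_{-i}$'' does lean on $u_i$ being independent of $\theta_{-i}$, and the paper's contrapositive slides over precisely this point without comment. Your caution there is well placed rather than a defect.
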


The proof for the positive claim is in the appendix. To see the negative claim in Proposition \ref{prop:LeakageEPIC}, recall the Dutch auction with no allocation in case of a tie, which we discussed in Section \ref{subsec:standardauctions} above. This auction is leakage-proof, but it is clearly not EPIC. If you know the other players' types, then you know at what price they would take the good. So, if you are the highest-valuing bidder, you can always increase your payoff by waiting and snatching the good from the second-highest bidder just before they take it.

\section{Concluding Remarks}
There are many modern market environments in which a mechanism designer has limited control over information leakage between participants — that is, over which actions of others a player may observe when making a decision. Examples include financial markets, online platforms, and blockchain-based markets. In this paper, we study the implications of such leakages for mechanism design.

Our contributions are threefold. First, we introduce an analytical framework for studying information leakage. The central objective is to design mechanisms that are leakage-proof, meaning that equilibrium outcomes remain invariant to any (consistent) beliefs about the leakage structure. Second, we apply this robustness notion to a general implementation problem and to the specific cases of efficient and revenue-maximizing auctions. In both settings, we show that leakage-proofness is the critical property: if a mechanism is to remain implementable — or an auction efficient or optimal — under potential leakage, then it must be leakage-proof (and conversely in the case of implementation). Finally, we construct a leakage-proof Dutch auction and show that leakage-proofness is an independent property of mechanisms. We further clarify in what sense it is orthogonal to the seemingly related concept of ex-post incentive compatibility.

\appendix

\section{Proofs}
\subsection{Proofs for Section \ref{sec:implementation}}
\begin{proof}[Proof of Theorem~\ref{thm:implementation}]
The if part from our definition of a leakage-proof equilibrium (Definition \ref{def:eqm}). For the only-if part, let $t^0=(t_1^0,...,t_n^0)$ denote a profile of leakage types where everyone believes it is common knowledge that everyone is equally fast (the zero profile; cf. Definition \ref{def:minimal}) and let $t^1=(t_1^0,...,t_{i-1}^0, t_i^1,t_{i+1}^0....,t_n^0)$ denote a generic one profile, corresponding to the leakage order where one player is faster than all the others who are equally fast (again, cf. Definition \ref{def:minimal}). We will be explicit about which player is fastest below.

Fix any type space $\mathcal T$. Because the game $G$ implements the social choice function $f$, the environment $\Gamma=(G,\mathcal T)$ admits a PBE $(S^*,\mu^*)$ with outcome $f(\theta)$ for all $\theta\in\Theta$ and $t\in T$. 
The equilibrium $(S^*,\mu^*)$ allows us to determine equilibrium play in the default game $G$, which we call the default strategies. Observe that the private histories of the any player $i$ under the zero-leakage type profile $t^{0}$ are identical to public histories, i.e.\ $h^{k-1}=h_{i}^{k-1}$. Hence, for all players $i\in N$, all payoff types $\theta_i\in\theta_i$, and all histories $h\in H$, we may define
\[
S_i^0(\theta_i)(h)=S_i^*(\theta_i,t_i^0)(h).
\]

For the following, let $\mu_i^d(h_i^{k-1},t_i)$ be the updated belief of player $i$ with a leakage type $t_i$ when the others follow their default strategies in $S^0$. Further, let $H_i^{k-1}(t_i)$ be player $i$'s set of all private histories up to (and including) stage $k-1$ when having leakage type $t_i$, and let $H_i^{k-1}(t_i,\theta_i) \subseteq H^{k-1}(t_i)$ be all the private histories of player $i$ that can be rationalized when player $i$ is of value type $\theta_i$ and follows her default strategy, $S_i^0(\theta_i)$, that is,
\begin{multline*}H_i^{k-1}(t_i,\theta_i) = \\ \left\{h_i^{k-1} \in H^{k-1}(t_i): \exists\theta_{-i}\in\Theta_{-i} \text{ such that } h_{i}^{k-1}\subset\text{supp}\left(\zeta((S_j^0(\theta_j))_{j \in N})\right) \right\}.\end{multline*}

On the other hand, let $\bar H_i^{k-1}(t_i,\theta_i) \subseteq H^{k-1}(t_i) \setminus H_i^{k-1}(t_i,\theta_i)$ be the set of histories that cannot be thus rationalized. Because the game $G$ is pruned, the set $\bar H_i^{k-1}(t_i,\theta_i)$ consists exactly of those histories in which player $i$ has pretended to be of different value type (and taken an action that is different from her equilibrium default action) for at least one round before, and including, $k-1$.

Now, fix player $i$ with value type $\theta_i$ and leakage type $t_i$, and take a history $h_i^{k-1} \in \bar H_i^{k-1}(t_i,\theta_i)$. Suppose, for the time being, that the other players $j \neq i$ all continue following their default strategies, $S_j^0$, in the all future rounds and that player $i$ has updated her belief to $\mu_i^d(h_i^{k-1},t_i)$. Because the game $G$ is finite, player $i$ has an optimal continuation strategy in such a situation, which we denote by $\hat S_i(\theta_i,t_i).$
 
Now, we combine the default strategies with the off-path best response $\hat S_i(\theta_i,t_i).$ defined above and consider the following strategy-belief profile $(S,\mu)$. For all players $i\in N$, all payoff types $\theta_i\in\theta_i$, all leakage types $t_i\in t_i$ and all private histories $h_{i}^{k-1}\in H_i^{k-1}\left(t_i\right)$, let 
\begin{align*}
S_{i}(\theta_i,t_i)(h_{i}^{k-1}) & =\begin{cases}
S_{i}^{0}(\theta_i)(h^{k-1}) & \text{if } h_i^{k-1} \in H_i^{k-1}(t_i,\theta_i)\\
\hat S_{i}(\theta_i,t_i)(h_{i}^{k-1}) & \text{otherwise}
\end{cases}
\intertext{and}
\mu_{i}(h_{i}^{k-1},t_i) &= \mu_i^d(h_i^{k-1},t_i).
\end{align*}
We want to show that $(S,\mu)$ is an equilibrium in the environment $(G,\mathcal T)$; i.e. that there is no player $i$, value type $\theta_i$, and leakage type $t_i$ with a deviation $\sigma_i$ from $S_i(\theta_i,t_i)$ such that
\begin{multline}
\mathbb{E}_{\mu_i}\left[U_i\left(\sigma_i,(S_j(\theta_j,t_j))_{j \neq i},\theta_i,\theta_{-i}\right)\left|h_i^{k-1},t_i\right.\right]  > \\ \mathbb{E}_{\mu_i}\left[U_i\left(S_i(\theta_i,t_i),(S_j(\theta_j,t_j))_{j \neq i},\theta_i,\theta_{-i}\right)\left|h_i^{k-1},t_i\right.\right] 
\end{multline}
for some stage $k$ and history $h_i^{k-1}$. Observe that we have mutual sequential optimality off the path of play by construction. Consequently, we only need to verify on-path histories.

So, fix a type $\theta_i$ and any $k$ and $h_i^{k-1} \in H_i^{k-1}(t_i,\theta_i).$ First, consider any type profile $t$ in which player $i$ knows that she is the uniquely fastest player, observing everyone else's actions. We want to argue that, for any such leakage type profile, if all other players follow their default strategies, it is optimal for player $i$ to do so, too.

For the argument, consider the one-profile $t^1$ in which player $i$ believes to be the uniquely fastest player and the others to be equally slow. Because the leakage type of the slow players in the one profile $t^1$ is equal to their leakage type in the zero profile $t^0$ (they believe that everyone is equally fast and that this is common knowledge), and we used $t^0$ to construct the default strategy profile, the equilibrium strategy $S_i^*$ of the slow players in $t^1$ is equal to their default strategy. 

Consequently, when the others follow their default strategies, then the payoff to player $i$ \textit{under any type profile in which she knows to be the uniquely fastest player} when choosing a strategy $\sigma_i$ is equal to that of the fastest player in $t_1$ (as the fast player observes all other actions in either case and draws the same conclusions),
\begin{equation}
\mathbb{E}_{\mu_i}\left[U_i\left(\sigma,(S_j(\theta_j,t_j))_{j \neq i},\theta_i,\theta_{-i}\right)\left|h_i^{k-1},t_i^1\right.\right],  
\end{equation}
for any $k$ and $h_i^{k-1}$ on the path of play of the strategy profile $S$.

In particular, when player $i$ follows the equilibrium strategy of the fastest player in the one-profile, $S_i^*(\theta_i,t_i^1)$, then the payoff is
\begin{multline}\label{eq:EUSzero}
\mathbb{E}_{\mu_i}\left[U_i\left(S_i^*(\theta_i,t_i^1),(S_j(\theta_j,t_j))_{j \neq i},\theta_i,\theta_{-i}\right)\left|h_i^{k-1},t_i^1\right.\right] \\ = \mathbb{E}_{\mu_i} \left[ u_i(f(\theta_i,\theta_{-i}), (\theta_i,\theta_{-i})) \left|h_i^{k-1},t_i^1\right.\right].    
\end{multline}

At the same time, because all opponents of $i$ follow the default strategy on the path of play, if player $i$ follows the default action, then by construction $f(\theta)$ obtains as an outcome for any given value profile $\theta$, giving us 
\begin{multline}\label{eq:EUSstar}
\mathbb{E}_{\mu_i}\left[U_i\left(S_i^0(\theta_i),(S_j(\theta_j,t_j))_{j \neq i},\theta_i,\theta_{-i}\right)\left|h_i^{k-1},t_i^1\right.\right] \\ = \mathbb{E}_{\mu_i} \left[ u_i(f(\theta_i,\theta_{-i}),(\theta_i,\theta_{-i})) \left|h_i^{k-1},t_i^1\right.\right].    
\end{multline}

Consequently, $S_i^0(\theta_i)$ yields the upper bound on possible payoffs. As this holds for all value types $\theta_i \in \Theta_i$, following the default strategy is a best response.

Next, consider any type profile in which player $i$ believes to be faster than all other players but one (if there is such a profile in $\mathcal T$). From the arguments above, we know that, even if player $i$ were to observe the action of the unobservable player, and no matter what that player does, player $i$ will not want deviate from $S_i^0(\theta_i)$ for any value type $\theta_i$. So, again, that player best responds by playing the default strategy. In fact, we can repeat the above argument for any type profile in which player $i$ believes to be faster than all but some $m\geq 2$ other players, giving us that, no matter how fast a player, choosing the default strategy is a best response. We have, thus, established that $(S,\mu)$ is an equilibrium. 

To finish the proof, observe that, because in the equilibrium $(S,\mu)$, everyone behaves according to $S^0$ on the path of play, we have shown that $\left(S^0,\mu^0\right)$, where $\mu_i^0(\theta_i)(h)=\mu_i^*(\theta_i,t_i^0)(h)$ for all $h \in H$, is a leakage-proof equilibrium. 
\end{proof}

\subsection{Proofs for Section \ref{sec:auctions}}
\begin{proof}[Proof of Lemma \ref{lem:payoffbounds}] For leakage type $t_i$ and the history $h_i^{k-1}$ under consideration, we say that a type $(\hat\theta_i,t_i)$ is feasible when $Q_i(\hat\theta,t_i,t_i, h_i^{k-1})>0$. The value types $\theta_i$ thus identified correspond to the value types that may still be in the auction at the history under consideration. Incentive compatibility for a (feasible) bidder $\left(\theta_i,t_i\right)$ mimicking a feasible bidder $\left(\tilde\theta_i,t_i\right)$ requires
\begin{multline}
\theta_iQ_i\left(\theta_i,t_i,t_i,h_i^{k-1}\right)-M_i\left(\theta_i,t_i,t_i,h_i^{k-1}\right) \\ \geq \theta_iQ_i\left(\tilde\theta_i,t_i,t_i,h_i^{k-1}\right)-M_i\left(\tilde\theta_i,t_i,t_i,h_i^{k-1}\right).\label{eq:ic1}
\end{multline}
Similarly, the incentive compatibility for a bidder $\left(\tilde\theta_i,t_i\right)$ mimicking a bidder $\left(\theta_i,t_i\right)$ yields
\begin{multline}
\tilde\theta_iQ_i\left(\tilde\theta_i,t_i,t_i,h_i^{k-1}\right)-M_i\left(\tilde\theta_i,t_i,t_i,h_i^{k-1}\right) \\ \geq \tilde\theta_iQ_i\left(\theta_i,t_i,t_i,h_i^{k-1}\right)-M_i\left(\theta_i,t_i,t_i,h_i^{k-1}\right).\label{eq:ic2}
\end{multline}
Combining inequality~\eqref{eq:ic1} with inequality~\eqref{eq:ic2} and rearranging terms, we have
\begin{multline}
\tilde\theta_i\left(Q_i\left(\theta_i,t_i,t_i,h_i^{k-1}\right)-Q_i\left(\tilde\theta_i,t_i,t_i,h_i^{k-1}\right)\right) \\ \leq M_i\left(\theta_i,t_i,t_i,h_i^{k-1}\right)-M_i\left(\tilde\theta_i,t_i,t_i,h_i^{k-1}\right) \\ \leq\theta_i\left(Q_i\left(\theta_i,t_i,t_i,h_i^{k-1}\right)-Q_i\left(\tilde\theta_i,t_i,t_i,h_i^{k-1}\right)\right).\label{eq:diff}
\end{multline}
The outer inequality in~\eqref{eq:diff} requires
\[
\left(\theta_i-\tilde\theta_i\right)\left(Q_i\left(\theta_i,t_i,t_i,h_i^{k-1}\right)-Q_i\left(\tilde\theta_i,t_i,t_i,h_i^{k-1}\right)\right)\geq0,
\]
which implies that $Q_i\left(\cdot\right)$ is non-decreasing in $\theta_i$, giving us claim (i).

To continue, we we define
\begin{align*}
U_i &= \mathbb{E}_{\mu_i}\left[U_i\left(S_i(\theta_i, t_i),(S_j(\theta_j,t_j))_{j \neq i},\theta_i,\theta_{-i}\right)\left|h_i^{k-1},t_i\right.\right] \\
\tilde U_i &= \mathbb{E}_{\mu_i}\left[U_i\left(S_i(\tilde\theta_i, t_i),(S_j(\theta_j,t_j))_{j \neq i},\theta_i,\theta_{-i}\right)\left|h_i^{k-1},t_i\right.\right]
\end{align*}
and use the facts
\begin{align*}
U_i &= \theta_i Q_i\left(\theta_i,t_i,t_i,h_i^{k-1}\right) - M_i\left(\theta_i,t_i,t_i,h_i^{k-1}\right) \\
\tilde U_i &= \theta_i Q_i\left(\tilde\theta_i,t_i,t_i,h_i^{k-1}\right) - M_i\left(\tilde\theta_i,t_i,t_i,h_i^{k-1}\right)
\end{align*}
to re-express the inequalities in \eqref{eq:diff} as
\begin{equation}
(\tilde\theta_i-\theta_i)Q_i\left(\theta_i,t_i,t_i,h_i^{k-1}\right) \leq \tilde U_i - U_i \leq (\tilde\theta_i-\theta_i)Q_i\left(\tilde\theta_i,t_i,t_i,h_i^{k-1}\right).
\end{equation}
In particular, for $\theta_{is+1}=\tilde\theta_i$ and $\theta_{is}=\theta_i$, we obtain
\begin{multline}
(\theta_{is+1}-\theta_{is})Q_i\left(\theta_{is},t_i,t_i,h_i^{k-1}\right) \\ \leq \mathbb{E}_{\mu_i}\left[U_i\left(S_i(\theta_{is+1}, t_i),(S_j(\theta_j,t_j))_{j \neq i},\theta_{is+1},\theta_{-i}\right)\left|h_i^{k-1},t_i\right.\right] - \\ \mathbb{E}_{\mu_i}\left[U_i\left(S_i(\theta_{is}, t_i),(S_j(\theta_j,t_j))_{j \neq i},\theta_{is},\theta_{-i}\right)\left|h_i^{k-1},t_i\right.\right] \\ \leq (\theta_{is+1}-\theta_{is})Q_i\left(\theta_{is+1},t_i,t_i,h_i^{k-1}\right).\label{eq:diff-U}
\end{multline}

Adding up these inequalities from $s=\underline m$ to $s=m-1$ and using Assumption \ref{ass:lowestvaluezero} then yields
\begin{multline*} \sum_{s = \underline m+1}^m (\theta_{is}-\theta_{is-1})Q_i(\theta_{is-1},t_i,t_i,h_i^{k-1}) \\ \leq \mathbb{E}_{\mu_i}\left[U_i\left(S_i(\theta_i,t_i),(S_j(\theta_j,t_j))_{j \neq i},\theta_i,\theta_{-i}\right)\left|h_i^{k-1},t_i\right.\right]  \\ \leq \sum_{s = \underline m+1}^m (\theta_{is}-\theta_{is-1})Q_i(\theta_{is},t_i,t_i,h_i^{k-1}). 
\end{multline*}  
But then,  it also holds 
\begin{multline*} \sum_{s = \underline m+1}^m (\theta_{is}-\theta_{is-1})Q_i(\theta_{is-1},\hat t_i,t_i,h_i^{k-1}) \\ \leq \mathbb{E}_{\mu_i}\left[U_i\left(S_i(\theta_i,\hat t_i),(S_j(\theta_j,t_j))_{j \neq i},\theta_i,\theta_{-i}\right)\left|h_i^{k-1},t_i\right.\right]  \\ \leq \sum_{s = \underline m+1}^m (\theta_{is}-\theta_{is-1})Q_i(\theta_{is},\hat t_i,t_i,h_i^{k-1}),
\end{multline*}  
for feasible $\hat t_i$ that believe to be slower than $t_i$, because the outer expressions in the inequalities above correspond to the bounds on the expected utility of type $(\theta_i,\hat t_i)$, conditional on the information of the faster leakage type $t_i$. To finish the proof, we then recall that the allocation-invariance under leakage property implies $Q_i(\theta_i,t_i,t_i,h_i^{k-1})=Q_i(\theta_i,\hat t_i,t_i,h_i^{k-1})$ for all feasible $t_i$ and $\hat t_i$, giving us claim (ii).
\end{proof}

\begin{proof}[Proof of Theorem \ref{thm:efficiency}] 
As in the proof to Theorem \ref{thm:implementation}, let $t^0=(t_1^0,...,t_n^0)$ denote a profile of leakage types where everyone believes it is common knowledge that everyone is equally fast (the zero profile; cf. Definition \ref{def:minimal}) and let $t^1=(t_1^0,...,t_{i-1}^0, t_i^1,t_{i+1}^0....,t_n^0)$ denote a generic one profile, corresponding to the leakage order where one player is faster than all the others who are equally fast (again, cf. Definition \ref{def:minimal}). We will be explicit about which player is fastest below.

Fix any type space $\mathcal T$. By assumption, the environment $\Gamma=(G,\mathcal T)$ admits a PBE $(S^*,\mu^*)$ that is efficient for all $\theta\in\Theta$ and $t\in T$. The equilibrium $(S^*,\mu^*)$ allows us to determine equilibrium play in the default game $G$, which we call the default strategies. Observe that the private histories of the any player $i$ under the zero-leakage type profile $t^{0}$ are identical to public histories, i.e.\ $h^{k-1}=h_{i}^{k-1}$. Hence, for all players $i\in N$, all payoff types $\theta_i\in\theta_i$, and all histories $h\in H$, we may define
\[
S_i^0(\theta_i)(h)=S_i^*(\theta_i,t_i^0)(h).
\]

For the following, let $\mu_i^d(h_i^{k-1},t_i)$ be the updated belief of player $i$ with a leakage type $t_i$ when the others follow their default strategies in $S^0$. Further, let $H_i^{k-1}(t_i)$ be player $i$'s set of all private histories up to (and including) stage $k-1$ when having leakage type $t_i$, and let $H_i^{k-1}(t_i,\theta_i) \subseteq H^{k-1}(t_i)$ be all the private histories of player $i$ that can be rationalized when player $i$ is of value type $\theta_i$ and follows her default strategy, $S_i^0(\theta_i)$, that is,
\begin{multline*}H_i^{k-1}(t_i,\theta_i) = \\ \left\{h_i^{k-1} \in H^{k-1}(t_i): \exists\theta_{-i}\in\Theta_{-i} \text{ such that } h_{i}^{k-1}\subset\text{supp}\left(\zeta((S_j^0(\theta_j))_{j \in N})\right) \right\}.\end{multline*}

On the other hand, let $\bar H_i^{k-1}(t_i,\theta_i) \subseteq H^{k-1}(t_i) \setminus H_i^{k-1}(t_i,\theta_i)$ be the set of histories that cannot be thus rationalized. Because the game $G$ is pruned, the set $\bar H_i^{k-1}(t_i,\theta_i)$ consists exactly of those histories in which player $i$ has pretended to be of different value type (and taken an action that is different from her equilibrium default action) for at least one round before, and including, $k-1$.

Now, fix player $i$ with value type $\theta_i$ and leakage type $t_i$, and take a history $h_i^{k-1} \in \bar H_i^{k-1}(t_i,\theta_i)$. Suppose, for the time being, that the other players $j \neq i$ all continue following their default strategies, $S_j^0$, in all future rounds and that player $i$ has updated her belief to $\mu_i^d(h_i^{k-1},t_i)$. Because the game $G$ is finite, player $i$ has an optimal continuation strategy in such a situation, which we denote by $\hat S_i(\theta_i,t_i).$
 
Now, we combine the default strategies with the off-path best response $\hat S_i(\theta_i,t_i).$ defined above and consider the following strategy-belief profile $(S,\mu)$. For all players $i\in N$, all payoff types $\theta_i\in\theta_i$, all leakage types $t_i\in t_i$ and all private histories $h_{i}^{k-1}\in H_i^{k-1}\left(t_i\right)$, let 
\begin{align*}
S_{i}(\theta_i,t_i)(h_{i}^{k-1}) & =\begin{cases}
S_{i}^{0}(\theta_i)(h^{k-1}) & \text{if } h_i^{k-1} \in H_i^{k-1}(t_i,\theta_i)\\
\hat S_{i}(\theta_i,t_i)(h_{i}^{k-1}) & \text{otherwise}
\end{cases}
\intertext{and}
\mu_{i}(h_{i}^{k-1},t_i) &= \mu_i^d(h_i^{k-1},t_i).
\end{align*}
We want to show that $(S,\mu)$ is an $\epsilon$-PBE in the environment $(G,\mathcal T)$; i.e. that there is an $\epsilon>0$ such that there is no player $i$, value type $\theta_i$, and leakage type $t_i$ with a deviation $\sigma_i$ from $S_i(\theta_i,t_i)$ such that
\begin{multline}
\mathbb{E}_{\mu_i}\left[U_i\left(\sigma_i,(S_j(\theta_j,t_j))_{j \neq i},\theta_i,\theta_{-i}\right)\left|h_i^{k-1},t_i\right.\right]  > \\ \mathbb{E}_{\mu_i}\left[U_i\left(S_i(\theta_i,t_i),(S_j(\theta_j,t_j))_{j \neq i},\theta_i,\theta_{-i}\right)\left|h_i^{k-1},t_i\right.\right] + \epsilon
\end{multline}
for any stage $k$ and history $h_i^{k-1}$. Observe that we have mutual sequential optimality off the path of play by construction. Consequently, we only need to verify on-path histories.

So, fix a type $\theta_i$ and any $k$ and $h_i^{k-1} \in H_i^{k-1}(t_i,\theta_i).$ First, consider any type profile $t$ in which player $i$ knows that she is the uniquely fastest player, observing everyone else's actions. We want to argue that, for any such leakage type profile, if all other players follow their default strategies, it is optimal for player $i$ to do so, too.

For the argument, consider the one-profile $t^1$ in which player $i$ believes to be the uniquely fastest player and the others to be equally slow. Because the leakage type of the slow players in the one profile $t^1$ is equal to their leakage type in the zero profile $t^0$ (they believe that everyone is equally fast and that this is common knowledge), and we used $t^0$ to construct the default strategy profile, the equilibrium strategy $S_i^*$ of the slow players in $t^1$ is equal to their default strategy. 

Consequently, when the others follow their default strategies, then the payoff to player $i$ \textit{under any type profile in which she knows to be the uniquely fastest player} when choosing a strategy $\sigma_i$ is equal to that of the fastest player in $t_1$ (as the fast player observes all other actions in either case and draws the same conclusions),
\begin{equation}
\mathbb{E}_{\mu_i}\left[U_i\left(\sigma,(S_j(\theta_j,t_j))_{j \neq i},\theta_i,\theta_{-i}\right)\left|h_i^{k-1},t_i^1\right.\right],  
\end{equation}
for any $k$ and $h_i^{k-1}$ on the path of play of the strategy profile $S$.

In particular, when player $i$ follows the equilibrium strategy of the fastest player in the one-profile, $S_i^*(\theta_i,t_i^1)$, then the payoff is $$
\mathbb{E}_{\mu_i}\left[U_i\left(S_i^*(\theta_i,t_i^1),(S_j(\theta_j,t_j))_{j \neq i},\theta_i,\theta_{-i}\right)\left|h_i^{k-1},t_i^1\right.\right].$$ On the other hand, because all opponents of $i$ follow the default strategy on the path of play, if player $i$ follows the default action, then the payoff is 
\begin{multline*}
\mathbb{E}_{\mu_i}\left[U_i\left(S_i^0(\theta_i),(S_j(\theta_j,t_j))_{j \neq i},\theta_i,\theta_{-i}\right)\left|h_i^{k-1},t_i^1\right.\right] \\= \mathbb{E}_{\mu_i}\left[U_i\left(S_i^*(\theta_i,t_0),(S_j(\theta_j,t_j))_{j \neq i},\theta_i,\theta_{-i}\right)\left|h_i^{k-1},t_i^1\right.\right].    
\end{multline*}
We want to argue that 
\begin{multline*}
\mathbb{E}_{\mu_i}\left[U_i\left(S_i^0(\theta_i),(S_j(\theta_j,t_j))_{j \neq i},\theta_i,\theta_{-i}\right)\left|h_i^{k-1},t_i^1\right.\right] + 2\delta \geq \\ \mathbb{E}_{\mu_i}\left[U_i\left(S_i^*(\theta_i,t_i^1),(S_j(\theta_j,t_j))_{j \neq i},\theta_i,\theta_{-i}\right)\left|h_i^{k-1},t_i^1\right.\right].
\end{multline*}

Indeed, the allocation-invariance-under-leakage property together with the payoff bounds from Claim (ii) in Lemma \ref{lem:payoffbounds} imply
\begin{multline*} \sum_{s = \underline m+1}^m (\theta_{is}-\theta_{is-1})\left[ Q_i(\theta_{is-1},t_i^1,t_i^1,h_i^{k-1}) - Q_i(\theta_{is},t_i^1,t_i^1,h_i^{k-1})\right] \\ \leq \mathbb{E}_{\mu_i}\left[U_i\left(S_i^*(\theta_i,t_i^0),(S_j(\theta_j,t_j))_{j \neq i},\theta_i,\theta_{-i}\right)\left|h_i^{k-1},t_i^1\right.\right] \hspace{8em}\\ -\mathbb{E}_{\mu_i}\left[U_i\left(S_i^*(\theta_i,t_i^1),(S_j(\theta_j,t_j))_{j \neq i},\theta_i,\theta_{-i}\right)\left|h_i^{k-1},t_i^1\right.\right] \\ \leq \sum_{s = \underline m+1}^m (\theta_{is}-\theta_{is-1})\left[ Q_i(\theta_{is},t_i^1,t_i^1,h_i^{k-1})-Q_i(\theta_{is-1},t_i^1,t_i^1,h_i^{k-1})\right],
\end{multline*}  
where the definitions of $m$ and $\underline m$ are as in Lemma \ref{lem:payoffbounds}. 

Now, recall that $|\theta_i-\theta_i'| \leq \delta$ for all $\theta_i,\theta_i' \in \Theta_i$. Then, because $Q_i$ is non-decreasing in $\theta_i$ on the set of value types $\{\theta_i \in \Theta_i: Q_i(\theta_i, t_i ,t_i,h_i^{k-1})>0\}$ (cf.~the first claim in Lemma \ref{lem:payoffbounds}) and its value lies between zero and one, above inequalities imply 
\begin{multline} -\delta \leq \mathbb{E}_{\mu_i}\left[U_i\left(S_i^*(\theta_i,t_i^0),(S_j(\theta_j,t_j))_{j \neq i},\theta_i,\theta_{-i}\right)\left|h_i^{k-1},t_i^1\right.\right] \\ -\mathbb{E}_{\mu_i}\left[U_i\left(S_i^*(\theta_i,t_i^1),(S_j(\theta_j,t_j))_{j \neq i},\theta_i,\theta_{-i}\right)\left|h_i^{k-1},t_i^1\right.\right] \leq \delta,
\end{multline}  
as desired.

Next, consider any type profile in which player $i$ believes to be faster than all other players but one (if there is such a profile in $\mathcal T$). From the arguments above, we know that, even if player $i$ were to observe the action of the unobservable player, and no matter what that player does, player $i$ cannot gain more than $2\delta$ by deviating from $S_i^0(\theta_i)$ for any value type $\theta_i$. So, again, that player approximately best responds by playing the default strategy. In fact, we can repeat the above argument for any type profile in which player $i$ believes to be faster than all but some $m\geq 2$ other players, giving us that, no matter how fast a player, choosing the default strategy is an approximate best response. We have, thus, established that $(S,\mu)$ is an $\epsilon-$PBE with $\epsilon=2\delta$. 

To finish the proof, observe that, because in the equilibrium $(S,\mu)$, everyone behaves according to $S^0$ on the path of play, we have shown that $\left(S^0,\mu^0\right)$, where $\mu_i^0(\theta_i)(h)=\mu_i^*(\theta_i,t_i^0)(h)$ for all $h \in H$, is a leakage-proof $\epsilon$-PBE with $\epsilon=2\delta$ that is efficient. 
\end{proof}

\begin{proof}[Proof of Lemma \ref{lem:allocationoptimalauction}] From the proof of Lemma \ref{lem:payoffbounds}, we know the maximum expected payment from a bidder $i$ having value $\theta_{is}$ is equal to 

\begin{align*} M_i\left(\theta_{is},t_i\right) &= \theta_{is} Q_i(\theta_{is},t_i) - \mathbb{E}_{\mu_i}\left[U_i\left(S_i(\theta_{is}, t_i),(S_j(\theta_j,t_j))_{j \neq i},\theta_i,\theta_{-i}\right)\right] \\
&= \theta_{is} Q_i(\theta_{is},t_i) -\sum_{m=1}^{s-1}\left(\theta_{im+1}-\theta_{im}\right)Q_i(\theta_{is},t_i),
\end{align*}
where $Q_i(\theta_{is},t_i)=Q_i\left(\theta_{is},t_i,t_i,h_\emptyset \right)$ is the expected allocation of type $(\theta_{is},t_i)$ at the onset of the auction. Letting $m_i = |\Theta_i|$, the expected revenue is
\begin{align*}
&\mathbb E_{\gamma}\left[\mathbb{E}_{\rho}\left[\sum_{i\in N}M_i\left(\theta_i,t_i\right)\right]\right] \\ &= \mathbb E_{\gamma}\left[\sum_{i\in N}\left[\sum_{m=1}^{m_i}\rho_i\left(\theta_{im}\right)M_{i}\left(\theta_{im},t_i\right) \right]\right]\\
&= \mathbb E_{\gamma}\left[\sum_{i\in N}\left[\sum_{m=1}^{m_i}\rho_i\left(\theta_{im}\right) \left[\theta_{im} Q_i\left(\theta_{im},t_i\right) -\sum_{\hat m=1}^{m-1}\left(\theta_{i\hat m+1}-\theta_{i\hat m}\right)Q_i\left(\theta_{i\hat m},t_i\right) \right]\right]\right]\\
&= \mathbb E_{\gamma}\left[\sum_{i\in N}\left[\sum_{m=1}^{m_i}\rho_i\left(\theta_{im}\right) \theta_{im} Q_i\left(\theta_{im},t_i\right) -\sum_{m=1}^{m_i}\rho_i\left(\theta_{im}\right)\sum_{\hat m=1}^{m-1}\left(\theta_{i\hat m+1}-\theta_{i\hat m}\right)Q_i\left(\theta_{i\hat m},t_i\right) \right]\right]\\
&= \mathbb E_{\gamma}\left[\sum_{i\in N}\left[\sum_{m=1}^{m_i}\rho_i\left(\theta_{im}\right) \theta_{im} Q_i\left(\theta_{im},t_i\right) -\sum_{m=1}^{m_i}\left(\theta_{im+1}-\theta_{im}\right)Q_i\left(\theta_{im},t_i\right) \sum_{\hat m=m+1}^{m_i}    \rho_i\left(\theta_{i\hat m}\right) \right]\right]\\
&= \mathbb E_{\gamma}\left[\sum_{i\in N}\left[\sum_{m=1}^{m_i}\rho_i\left(\theta_{im}\right)  Q_i\left(\theta_{im},t_i\right)\left[ \theta_{im} -\left(\theta_{im+1}-\theta_{im}\right)\frac{\sum_{\hat m=m+1}^{m_i}\rho_i\left(\theta_{i\hat m}\right)}{\rho_i\left(\theta_{im}\right)}\right] \right]\right]\\
&= \mathbb E_{\gamma}\left[\sum_{i\in N}\left[\sum_{m=1}^{m_i}\rho_i\left(\theta_{im}\right)  Q_i\left(\theta_{im},t_i\right)\left[ \theta_{im} -\left(\theta_{im+1}-\theta_{im}\right)\frac{1-\sum_{\hat m=1}^{m}\rho_i\left(\theta_{i\hat m}\right)}{\rho_i\left(\theta_{im}\right)}\right] \right]\right]\\
&= \mathbb E_{\gamma}\Bigg[\sum_{i\in N}\Bigg[\sum_{m=1}^{m_i}\rho_i\left(\theta_{im}\right) \sum_{\theta_{-i}} \rho_{-i}(\theta_{-i})\mathbb{E}_{\gamma_{-i}}\left[q_i\left((\theta_{im},\theta_{-i}),(t_i,t_{-i})\right)\right] \times \\ &\hspace{15em}\left[ \theta_{im} -\left(\theta_{im+1}-\theta_{im}\right)\frac{1-\sum_{\hat m=1}^{m}\rho_i\left(\theta_{i\hat m}\right)}{\rho_i\left(\theta_{im}\right)}\right] \Bigg] \Bigg]\\
&= \mathbb E_{\gamma}\left[\sum_{\theta \in \Theta}\rho(\theta)\sum_{i\in N} \mathbb{E}_{\gamma_{-i}}\left[q_i\left(\theta,(t_i,t_{-i})\right)\right]\left[ \theta_{im} -\left(\theta_{im+1}-\theta_{im}\right)\frac{1-\sum_{\hat m=1}^{m}\rho_i\left(\theta_{i\hat m}\right)}{\rho_i\left(\theta_{im}\right)}\right]\right] \\
&= \mathbb E_{\gamma}\left[\sum_{\theta \in \Theta}\rho(\theta)\sum_{i\in N} q_i\left(\theta,t)\right)\left[ \theta_{im} -\left(\theta_{im+1}-\theta_{im}\right)\frac{1-\sum_{\hat m=1}^{m}\rho_i\left(\theta_{i\hat m}\right)}{\rho_i\left(\theta_{im}\right)}\right]\right].
\end{align*}
The third-to-last equality follows because value and leakage types are independent (recall $\gamma_{-i}$ is $i$'s prior about $T_{-i}$). The second-to-last equality follows because value types are independent. And the last equality follows from the law of iterated expectations. In the final expression, the term in the square brackets corresponds to the virtual valuation $v_{im}$, which is independent of $t$, thus giving us the claim.
\end{proof}

\subsection{Proofs for Section \ref{sec:discussion}}
For the proofs in this section, we make use of following useful observation.

\begin{observation}\label{obs:EPIC} Any EPIC mechanism $\left(\Gamma,S,\mu\right)$ is sequentially \emph{ex-post} incentive compatible on the equilibrium path: For each payoff type profile $\theta\in\Theta$, each stage history $h\subset z$ where $z$ is in the support of $\zeta\left((S_j(\theta_j))_{j \in N})\right)$ and each player $i\in P\left(h\right)$, we have
$$U_i^h\left(S_{i}\left(\theta_{i}\right),S_{-i}\left(\theta_{-i}\right),\theta\right)
\geq U_i^h\left(\sigma_i,S_{-i}\left(\theta_{-i}\right),\theta\right)$$
for any $\sigma_i$, where $U_i^h(.)$ denotes player $i$'s continuation utility under the respective strategy and type profiles.
\end{observation}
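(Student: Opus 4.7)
The plan is to establish Observation \ref{obs:EPIC} by a standard splicing-deviation argument: if a profitable continuation deviation existed on the equilibrium path, one could paste it onto the equilibrium strategy from the start to manufacture an ex ante profitable deviation, contradicting EPIC. The key is that reaching an on-path history $h$ occurs with strictly positive probability under $S(\theta)$, so strictly higher continuation utility conditional on reaching $h$ translates to strictly higher ex ante utility.

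Concretely, I would fix an EPIC mechanism $(\Gamma,S,\mu)$, a payoff-type profile $\theta\in\Theta$, an on-path stage history $h\subset z$ (with $z$ in the support of $\zeta((S_j(\theta_j))_{j\in N})$), and a player $i\in P(h)$. Suppose, toward a contradiction, that there is some $\sigma_i$ with
\[
U_i^h\bigl(\sigma_i, S_{-i}(\theta_{-i}),\theta\bigr) > U_i^h\bigl(S_i(\theta_i), S_{-i}(\theta_{-i}),\theta\bigr).
\]
I would then define a spliced strategy $\tilde\sigma_i$ that coincides with $S_i(\theta_i)$ at every private history that does not weakly succeed $h$ and with $\sigma_i$ at every private history that weakly succeeds $h$. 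Because $\tilde\sigma_i$ agrees with $S_i(\theta_i)$ up through stage $h$, the distribution over terminal histories that do not pass through $h$ is identical under $(\tilde\sigma_i,S_{-i}(\theta_{-i}))$ and $(S_i(\theta_i),S_{-i}(\theta_{-i}))$, while conditional on reaching $h$ the continuation distributions differ precisely by the switch from $S_i(\theta_i)$ to $\sigma_i$.

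Taking expectations and using the law of total expectation yields
\[
U_i\bigl(\tilde\sigma_i,S_{-i}(\theta_{-i}),\theta\bigr) - U_i\bigl(S_i(\theta_i),S_{-i}(\theta_{-i}),\theta\bigr) = p\cdot\bigl[U_i^h(\sigma_i,S_{-i}(\theta_{-i}),\theta) - U_i^h(S_i(\theta_i),S_{-i}(\theta_{-i}),\theta)\bigr],
\]
where $p$ is the probability that the play of $(S_i(\theta_i),S_{-i}(\theta_{-i}))$ reaches $h$. Since $h$ lies on the path of play of $S(\theta)$, we have $p>0$, and by assumption the bracketed difference is strictly positive, so the right-hand side is strictly positive. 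This contradicts EPIC, which asserts that $S_i(\theta_i)$ is a best response to $S_{-i}(\theta_{-i})$ under the true type profile $\theta$.

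I do not foresee a serious obstacle here; the one point to handle with care is the mixed-strategy case, where “on path” only means $p>0$ rather than $p=1$. The splicing construction and the law of total expectation address this cleanly, since any strictly positive $p$ is enough to lift the conditional strict inequality to an unconditional one. A secondary point to check is that $\tilde\sigma_i$ remains a well-defined strategy in $\Sigma_i$ of the underlying default game (no leakage types are at play in the EPIC definition), which is immediate because $\sigma_i$ is assumed to be such a strategy and splicing at a public history preserves measurability with respect to the public information structure.
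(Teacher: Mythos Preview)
Your proposal is correct and follows essentially the same route as the paper: assume a profitable continuation deviation at an on-path history, splice it onto the equilibrium strategy, and use the strictly positive probability of reaching $h$ to lift the strict inequality to the ex ante level, contradicting EPIC. Your write-up is, if anything, slightly more explicit than the paper's about the law-of-total-expectation step and the role of $p>0$ in the mixed-strategy case.
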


\begin{proof}[Proof of Observation \ref{obs:EPIC}]
Suppose not. Then, there is a player $i$, a type $\theta_i$, a stage history $h\subset z$ where $z$ is in the support of $\zeta\left((S_j(\theta_j))_{j \in N})\right)$ and a deviation $\sigma_i$, such that
$$U_i^h\left(S_{i}\left(\theta_{i}\right),S_{-i}\left(\theta_{-i}\right),\theta\right)
< U_i^k\left(\sigma_i,S_{-i}\left(\theta_{-i}\right),\theta\right).$$

We construct a strategy $S'_i\left(\theta_i\right)$ that is a profitable deviation from $S_i\left(\theta_i\right)$. Let $i$ play $S_i\left(\theta_i\right)$ unless they encounter $h$, in which case they play $\sigma_i$ from that point on. Formally,

\[
S'_i\left(\theta_i\right)=\begin{cases}
\sigma_i & \text{at \ensuremath{h'} if \ensuremath{h\subseteq h'},}\\
S_i\left(\theta_i\right) & \text{otherwise.}
\end{cases}
\]
Because $h$ is on the equilibrium path, the path-of-play passes through $h$ with positive probability. Then, we have
$$U_i\left(S_{i}\left(\theta_{i}\right),S_{-i}\left(\theta_{-i}\right),\theta\right)
< U_i\left(\sigma_i,S_{-i}\left(\theta_{-i}\right),\theta\right).$$
But this says that $S$ is not EPIC, giving us the desired contradiction.
\end{proof}

\begin{proof}[Proof of Proposition \ref{prop:epic}]Fix a pure strategy EPIC mechanism $\left(\Gamma,S,\mu\right)$. We want to show that $S$ is a leakage-proof equilibrium in $G$. Because $S$ is EPIC, Observation \ref{obs:EPIC} gives that, for each payoff type profile $\theta\in\Theta$, each stage history $h\subset z$ where $z$ is in the support of $\zeta\left((S_j(\theta_j))_{j \in N})\right)$ and each player $i\in P\left(h\right)$, we have
$$U_i^h\left(S_{i}\left(\theta_{i}\right),S_{-i}\left(\theta_{-i}\right),\theta\right)
\geq U_i^h\left(\sigma_i,S_{-i}\left(\theta_{-i}\right),\theta\right)$$
for any $\sigma_i$. But then, because we are dealing with pure strategies, $S_i\left(\theta_i\right)$ being a best response irrespective of the payoff types of other players implies that player $i$ does not want to deviate even if $i$ observes the actions of other players, as long as they are playing according to the pure strategy profile $S_{-i}$. In other words, leakage will not lead to profitable deviations, thereby providing leakage-proofness as desired. 
\end{proof}

\begin{proof}[Proof of Proposition \ref{prop:LeakageEPIC}]
By contraposition. Let $\left(\Gamma,S,\mu\right)$ be a static mechanism that is not EPIC. By Observation \ref{obs:EPIC}, there is a player $i\in N$, a type profile $\theta\in\Theta$, and a deviation $\sigma_i$ such that 
$$U_i^{h_\emptyset}\left(S_{i}\left(\theta_{i}\right),S_{-i}\left(\theta_{-i}\right),\theta\right)
< U_i^{\emptyset}\left(\sigma_i,S_{-i}\left(\theta_{-i}\right),\theta\right)$$
In particular, there exists at least one action $a_{-i}$ in the support of $\left(S_j\left(\theta_j\right)\left(h_{\emptyset}\right)\right)_{j\neq i}$ for which player $i$ optimally chooses an action that is not in the support of $S_i\left(\theta_i\right)\left(h_{\emptyset}\right)$. Now, by Assumption \ref{ass:minimal}, there is a type profile in which player $i$ is the uniquely fastest player and all other players believe that everyone is equally slow. Under such a profile, all players except $i$ play according to $\left(S_j\left(\theta_j\right)\left(h_{\emptyset}\right)\right)_{j\neq i}$. So, for the private history $h_i = \{ a_{-i} \}$, player $i$ would like to choose an action that is different from what $S_i$ prescribes, violating leakage-proofness and, thus, giving us the claim.
\end{proof}

\section{The Role of Pruning and a Minimally Rich Leakage Type Space}\label{sec:commonknowledge}
In this section, we provide an example showing that if we drop Assumptions \ref{ass:minimal} (Minimally Rich Type Space) and \ref{ass:pruning} (Pruning), then Theorem \ref{thm:implementation} fails: Even if the same social choice function $f$ is implemented under any leakage order, there may not exist a leakage-proof equilibrium in the game $G$ which implements $f$.

\begin{example}
There are two players. Player 1 has two possible types: $\Theta_{1}=\left\{ \theta_{H},\theta_{L}\right\} $ with equal probability, and player 2 has one type: $\Theta_{2}=\left\{ \theta_{2}\right\} $. There are 5 possible outcomes: $X=\left\{ x,y,z,m,n\right\} $. 

The following table depicts, for each outcome in $X$ and type profile $\theta \in \Theta_1\times\Theta_2$, the payoffs $(u_1,u_2)$ of the two players ($u_1$ corresponds to player $1$'s payoff and $u_2$ corresponds to player $2$'s payoff):

\medskip
\begin{center}
\begin{tabular}{|c|c|}
\hline 
$x$ & $\theta_{2}$\\
\hline 
$\theta_{H}$ & (2,1)\\
\hline 
$\theta_{L}$ & (0,1)\\
\hline 
\end{tabular} %
\begin{tabular}{|c|c|}
\hline 
$y$ & $\theta_{2}$\\
\hline 
$\theta_{H}$ & (0,1)\\
\hline 
$\theta_{L}$ & (2,1)\\
\hline 
\end{tabular} %
\begin{tabular}{|c|c|}
\hline 
$z$ & $\theta_{2}$\\
\hline 
$\theta_{H}$ & (-2,-2)\\
\hline 
$\theta_{L}$ & (-2,-2)\\
\hline 
\end{tabular} %
\begin{tabular}{|c|c|}
\hline 
$m$ & $\theta_{2}$\\
\hline 
$\theta_{H}$ & (-2,2)\\
\hline 
$\theta_{L}$ & (-2,2)\\
\hline 
\end{tabular} %
\begin{tabular}{|c|c|}
\hline 
$n$ & $\theta_{2}$\\
\hline 
$\theta_{H}$ & (2,-2)\\
\hline 
$\theta_{L}$ & (2,-2)\\
\hline 
\end{tabular}
\par\end{center}

\medskip
\noindent The social choice function $f$ is given by:
\begin{center}
\begin{tabular}{|c|c|}
\hline 
$f$ & $\theta_{2}$\\
\hline 
$\theta_{H}$ & $x$\\
\hline 
$\theta_{L}$ & $y$\\
\hline 
\end{tabular}
\par\end{center}

\medskip
The game $G$ is a static game as follows (each box specifies the outcome resulting from each pair of actions):
\begin{center}
\begin{tabular}{|c|c|c|c|c|c|}
\hline 
$G$ & $a^{*}$ & $a_{2H}^{*}$ & $a_{2L}^{*}$ & $a'_{2H}$ & $a'_{2L}$\\
\hline 
$a_{1H}$ & $x$ & $z$ & $z$ & $m$ & $z$\\
\hline 
$a_{1L}$ & $y$ & $z$ & $z$ & $z$ & $m$\\
\hline 
$a_{1H}^{*}$ & $z$ & $x$ & $z$ & $n$ & $z$\\
\hline 
$a_{1L}^{*}$ & $z$ & $z$ & $y$ & $z$ & $n$\\
\hline 
\end{tabular}
\par\end{center}
\end{example}

\medskip
For the sake of the argument, we consider a setup in which leakage orders are common knowledge. Let us list all the possible leakage orders and the corresponding equilibria:\footnote{%
    For simplicity, we omit the belief system, which can be easily derived from the strategy profile.}
\begin{description}
\item [{$\precsim_{0} \, \hat{=} \, 1\sim2$:}] This corresponds to the leakage order under which the two players move simultaneously without observing each other's action. Clearly, $$S_{1}\left(\theta_{H},\precsim_{0}\right)=a_{1H}, \quad S_{1}\left(\theta_{L},\precsim_{0}\right)=a_{1L}, \quad S_{2}\left(\theta_{2},\precsim_{0}\right)=a^{*}$$ is a Bayesian Nash equilibrium strategy profile, which implements $f$.
\item [{$\precsim_{1}\,\hat{=}\,1\succ2$:}] Here, player 1 can observe player 2's action. The following strategy profile is a perfect Bayesian equilibrium which implements $f$:
\[
S_{1}\left(\theta_{1},\precsim_{1}\right)\left(a_{2}\right)=\begin{cases}
a_{1H} &\text{ if } a_{2}=a^{*},\theta_{1}=\theta_{H}\\
a_{1L} &\text{ if } a_{2}=a^{*},\theta_{1}=\theta_{L}\\
a_{1H}^{*} &\text{ if } a_{2}\in\left\{ a_{2H}^{*},a'_{2H}\right\} \\
a_{1L}^{*} &\text{ if } a_{2}\in\left\{ a_{2L}^{*},a'_{2L}\right\}, 
\end{cases}\quad\quad S_{2}\left(\theta_{2},\precsim_{1}\right)=a^{*}.
\]
\item [{$\precsim_{2}\,\hat{=}\,1\prec2$:}] Here, player 2 can observe player 1's action. The following strategy profile is a perfect Bayesian equilibrium which implements $f$: 
\[
S_{1}\left(\theta_{1},\precsim_{2}\right)=\begin{cases}
a_{1H}^{*} &\text{ if } \theta_{1}=\theta_{H}\\
a_{1L}^{*} &\text{ if } \theta_{1}=\theta_{L},
\end{cases}\quad S_{2}\left(\theta_{2},\precsim_{2}\right)\left(a_{1}\right)=\begin{cases}
a'_{2H} &\text{ if } a_{1}=a_{1H}\\
a'_{2L} &\text{ if } a_{1}=a_{1L}\\
a_{2H}^{*} &\text{ if } a_{1}=a_{1H}^{*}\\
a_{2L}^{*} &\text{ if } a_{1}=a_{1L}^{*}
\end{cases}.
\]
\end{description}

Consequently, the game $G$ implements the same social function $f$ under any leakage order. Nonetheless, it lacks a leakage-proof equilibrium. The intuition is that player 1 has costly precautionary actions, $a_{1H}^{*}$ and $a_{1L}^{*}$. These actions are worthwhile if information actually leaks to player 2, but may have negative consequences if no leakage occurs. As a result, the equilibrium fails to be leakage-proof.

If we prune the game, this issue disappears. Specifically, pruning to the equilibrium under no leakage, $\precsim_0$, restricts player 1 to $\{a_{1H},a_{1L}\}$ and player 2 to $a^*$. In this reduced game, the equilibrium outcome is $f(\theta)$ regardless of the leakage order ($\precsim_0$, $\precsim_1$, or $\precsim_2$). Pruning thus simplifies the strategic environment under leakages.

In contrast, assuming a minimally rich type space prevents $f(\theta)$ from being implemented under leakages. For example, consider a type profile where player 1 believes they are as slow as player 2, while player 2 knows both that they are faster and what player 1 believes. In this case, the outcome will be $m$ under any possible value type profile $\theta$. The minimally rich type space assumption, therefore, restricts the strategic settings to which our analysis applies.

\bibliographystyle{ecta}
\bibliography{Leakage}

\end{document}